\documentclass[conference]{IEEEtran}

\usepackage{amsmath,amssymb,amsthm,mathtools}
\usepackage{algorithm,algpseudocode}
\usepackage{graphicx}
\usepackage{hyperref}
\usepackage{booktabs}
\usepackage{microtype}
\usepackage{xcolor}
\usepackage{enumitem}
\usepackage[numbers,sort&compress]{natbib}
\usepackage{balance}

\hypersetup{
	colorlinks=true,
	linkcolor=blue!60!black,
	citecolor=blue!60!black,
	urlcolor=blue!60!black
}

\setlist{nosep,leftmargin=*}
\makeatletter
\def\thm@space@setup{%
	\thm@preskip=4pt plus 1pt minus 1pt
	\thm@postskip=\thm@preskip
}
\let\oldthebibliography\thebibliography
\renewcommand{\thebibliography}[1]{%
	\oldthebibliography{#1}%
	\setlength{\itemsep}{0pt}%
	\setlength{\parsep}{0pt}%
	\setlength{\parskip}{0pt}%
}
\makeatother

\newtheorem{theorem}{Theorem}
\newtheorem{lemma}[theorem]{Lemma}

\theoremstyle{definition}
\newtheorem{definition}[theorem]{Definition}

\newtheorem{remark}[theorem]{Remark}

\newcommand{\KL}{D_{\mathrm{KL}}}
\newcommand{\KLL}{D_{\mathrm{KL},L}}

\newcommand{\CMI}{\mathrm{I}}
\newcommand{\calX}{\mathcal{X}}
\newcommand{\calY}{\mathcal{Y}}
\newcommand{\calZ}{\mathcal{Z}}

\newcommand{\bbE}{\mathbb{E}}
\newcommand{\bbN}{\mathbb{N}}
\newcommand{\bbP}{\mathbb{P}}

\newcommand{\supp}{\mathrm{supp}}

\newcommand{\Order}{\mathcal{O}}

\begin{document}
	
	\title{Quantum Causal Discovery via Amplitude Estimation of Kullback--Leibler Divergence}
	
	\author{
		\IEEEauthorblockN{Shabnam Sodagari, \textit{Senior Member, IEEE}}
		\IEEEauthorblockA{California State University,
			Long Beach, CA 90840\\
			Email: shabnam@csulb.edu}
	}
	
	\IEEEaftertitletext{\vspace{-0.75\baselineskip}}
	
	\maketitle
	
	\begin{abstract}
		Causal discovery from observational data underpins applications in
		finance, climate modeling, and machine learning. Constraint-based
		causal discovery reduces structure learning to a sequence of
		conditional independence (CI) tests, where each test decides
		independence by estimating conditional mutual information
		$I(X;Y \mid Z)$ to additive precision $\tau$ and thresholding against
		it. Classically this requires $\Theta(1/\tau^{2})$ samples per test,
		a cost that dominates in the high-precision regime typical of weak
		dependencies. We present QKLA (Quantum Kullback--Leibler Amplitude
		estimation), a quantum algorithm that encodes a clipped log-density
		ratio as a bounded amplitude and applies amplitude estimation to
		recover a clipped KL expectation. Given coherent oracle access to the
		relevant distributions and a reversible log-ratio arithmetic oracle,
		QKLA achieves a quadratic precision improvement, needing only
		$\mathcal{O}((L/\tau)\log(1/\delta))$ queries, where $L$ is the
		log-ratio clip bound. Under per-stratum conditional-oracle access and
		a margin assumption for CI decisions, embedding this estimator in the
		PC algorithm compounds to an $\widetilde{\Omega}(1/(L\tau))$
		reduction in total oracle queries. We validate the theory in three
		experiments. A gate-level state-vector simulation of the full QKLA
		circuit confirms the predicted $\mathcal{O}(1/M)$ error decay.
		Across $K=20$ random binary distributions, classical and quantum
		error scalings match theory to within $0.01$ in slope. In an
		oracle-model benchmark inside PC on two networks (\textsc{Asia}, 8
		nodes; \textsc{Synthetic-12}, 12 nodes), the quantum CI subroutine
		reaches comparable skeleton-recovery $F_1$ while using
		$2.7$--$3.2\times$ fewer oracle queries at
		$\tau = 5\cdot 10^{-3}$ bits and $4.0$--$7.4\times$ fewer at
		$\tau = 10^{-3}$ bits.
	\end{abstract}
	
	\section{Introduction}
	
	Constraint-based causal discovery recovers the skeleton of a causal graph
	from a sequence of conditional independence (CI) tests on a joint
	distribution~\citep{spirtes2000causation,pearl2009causality}.
	The PC algorithm~\citep{spirtes1991algorithm} issues at most
	$\Order(n^{d+2})$ tests for $n$ variables at conditioning depth
	$d$~\citep{spirtes2000causation}, and each is nontrivial:
	plug-in estimation of $\CMI(X;Y\mid Z)$ to additive precision $\tau$
	requires $\Theta(1/\tau^2)$ i.i.d.\ samples in the fixed-support
	regime~\citep{paninski2003estimation}, while CI testing
	(binary $X,Y$) has matching complexity
	$\Theta\!\bigl(\max\bigl(\sqrt{|\calZ|}/\tau^2,\ \min(|\calZ|^{7/8}/\tau,\,
	|\calZ|^{6/7}/\tau^{8/7})\bigr)\bigr)$~\citep{canonne2018testing},
	where $|\calZ|$ is the number of distinct values of the conditioning
	vector $Z$. In the high-precision regime $\tau \in [10^{-3},10^{-2}]$
	bits, this $1/\tau^2$ classical cost dominates.
	
	Quantum amplitude estimation (QAE)~\citep{brassard2002quantum} reduces
	the query complexity of expectation estimation from $\Order(1/\tau^2)$
	samples to $\Order(1/\tau)$ oracle queries, a quadratic improvement in
	precision. Montanaro~\citep{montanaro2015quantum} generalized this into
	a framework for quantum speedup of Monte Carlo methods. Since conditional independence (CI) testing in its estimation
	formulation reduces to computing an expectation---the KL divergence
	$\KL(p(X,Y\mid z)\,\|\,p(X\mid z)\,p(Y\mid z)) = \bbE_p[\log(p/q)]$
	(with $p, q$ denoting the joint and product conditionals
	respectively)---the natural question is whether QAE transfers this
	$1/\tau^{2} \to 1/\tau$ advantage to constraint-based causal
	discovery. We answer affirmatively: in the oracle query model, we derive
	rigorous per-test and compound complexity bounds and make the
	governing constants---particularly the log-ratio clip $L$---explicit. 
	
	The contributions of this work are:
	\begin{enumerate}[leftmargin=*]
		\item We give a quantum algorithm \textsc{QKLA} that estimates the clipped KL
		expectation $\KLL(p \| q)$ to additive precision $\tau$ with success
		probability $1 - \delta$ using
		$M \cdot k$ calls to a preparation oracle for $p$ and a reversible
		log-ratio arithmetic oracle, where
		$M = 2^{\lceil \log_2 \lceil 4\pi L/\tau \rceil \rceil}$ Grover iterates
		satisfy $M \le 2\lceil 4\pi L/\tau\rceil$ and
		$k = \lceil 5\ln(1/\delta) \rceil$ median-amplification rounds, giving
		$M \cdot k = \Order\!\bigl((L/\tau)\log(1/\delta)\bigr)$ total queries
		(Theorem~\ref{thm:qkla}). The algorithm encodes the clipped log-ratio
		as a bounded amplitude via a uniformly controlled rotation and then
		applies QAE to the resulting $|1\rangle$-probability
		(Section~\ref{sec:algorithm}). 
		
		\item Assuming per-stratum conditional preparation oracles and
		classical access to the stratum weights $p(z)$, we lift
		\textsc{QKLA} to a conditional mutual information estimator
		\textsc{QCMIE} at query cost
		$\Order(|\calZ|\,L/\tau\,\log(|\calZ|/\delta))$
		(Theorem~\ref{thm:qcmi}), and embed it in PC to obtain a compound
		bound of $\widetilde{O}(n^{d+2}\, r^d\, L/\tau)$ quantum queries
		(Theorem~\ref{thm:compound}) versus
		$\widetilde\Omega(n^{d+2}\,r^d /\tau^2)$ classical samples in the
		fresh-data-per-test regime
		(Sections~\ref{sec:cmi}--\ref{sec:compound}).
		
		\item We perform gate-level validation by simulating the full
		\textsc{QKLA} circuit on a state-vector simulator: explicit
		construction of the preparation oracle via Gram--Schmidt,
		reversible log-ratio arithmetic with $n_{\mathrm{arith}} = 6$
		qubits, controlled Grover iterates, and inverse QFT. The simulated
		phase-register distribution matches the canonical theoretical QAE
		distribution to machine precision, and the $\Order(1/M)$ decay of
		the $80$th-percentile error is confirmed at log-log slope $-1.38$
		(Section~\ref{sec:exp-statevec}).
		
		\item In precision-scaling and PC benchmarks, averaged over
		$K = 20$ random binary distributions, we measure classical slope
		$-0.500$ and quantum slope $-1.009$, matching the theoretical
		$N^{-1/2}$ and $M^{-1}$ rates to within $0.01$ in slope
		(Fig.~\ref{fig:precision}); the crossover in queries to target
		precision sits near $\tau \approx 1.7 \cdot 10^{-2}$ bits. In an
		oracle-model benchmark for the CI subroutine inside PC on
		\textsc{Asia} ($8$ binary nodes, $8$ directed
		edges)~\cite{lauritzen1988local} and \textsc{Synthetic-12}---a
		randomly generated $12$-node binary DAG with edge probability $0.22$,
		random seed $11$, and $23$ directed edges---the quantum subroutine
		reaches the same skeleton-recovery $F_1$ as the classical plug-in
		estimator while using $2.7$--$3.2\times$ fewer oracle queries at
		$\tau = 5 \cdot 10^{-3}$ bits and $4.0$--$7.4\times$ fewer at
		$\tau = 10^{-3}$ bits (Fig.~\ref{fig:pc}), with the advantage growing
		as $\tau \to 0$ as predicted by Theorem~\ref{thm:compound}
		(Sections~\ref{sec:exp-precision}--\ref{sec:exp-pc}).
	\end{enumerate}
	
	Prior quantum KL-divergence estimation work~\citep{li2019quantum}
	operates in a different oracle model---the BHH coherent
	sampling-oracle~\citep{bravyi2011quantum}---and estimates
	$\KL(p \| q)$ at rate $\widetilde{\mathcal{O}}(\sqrt{s}/\varepsilon^{2})$
	in alphabet size $s$ and precision $\varepsilon$. Our setting differs on
	three axes. First, we use a preparation oracle plus a reversible
	log-ratio arithmetic oracle $\mathcal{O}_{\log}^{p,q}$, which computes
	the log-ratio coherently on the full superposition rather than
	reconstructing it from per-bin amplitude-estimation calls. Second,
	\textsc{QKLA} makes a \emph{single} QAE invocation rather than nesting
	quantum counting inside an outer Monte Carlo loop, yielding the
	$\mathcal{O}(L/\tau)$ per-call rate of Theorem~\ref{thm:qkla}---quadratically
	tighter in precision than the $1/\varepsilon^{2}$ scaling
	of~\citep{li2019quantum}, at the cost of an explicit factor $L$ in the log-ratio
	clip. Third, the two results target complementary regimes: the
	bound in~\cite{li2019quantum} is tight for large-alphabet distribution
	testing, while our \textsc{QKLA} targets the small-alphabet,
	high-precision regime of constraint-based CI testing, where the alphabet
	is $\mathcal{O}(1)$ and $\tau$ is the dominant parameter. Under the
	per-stratum-oracle assumptions of Section~\ref{sec:cmi}, we
	compose \textsc{QKLA} into a CMI estimator and then into PC
	(Theorems~\ref{thm:qcmi} and~\ref{thm:compound}); neither composition
	appears in prior quantum divergence-estimation work.
	
	Section~\ref{sec:preliminaries} contains notation and recalls canonical
	QAE. Section~\ref{sec:algorithm} presents \textsc{QKLA} and its per-call
	bound. Section~\ref{sec:cmi} composes per-call estimates into
	\textsc{QCMIE}. Section~\ref{sec:compound} proves the PC compound bound.
	Section~\ref{sec:experiments} reports experiments.
	Section~\ref{sec:related} places the result in context, and
	Sections~\ref{sec:discussion} and \ref{sec:conclus} conclude.
	
	\section{Preliminaries}
	\label{sec:preliminaries}
	
	\subsection{Conditional independence via conditional mutual information}
	
	Let $(X, Y, Z)$ be a jointly distributed triple on finite alphabets
	$\calX, \calY, \calZ$ with joint $p(x, y, z)$. We write $X \perp Y \mid Z$
	if $p(x, y \mid z) = p(x \mid z)\,p(y \mid z)$ for every $(x, y, z)$ with
	$p(z) > 0$. The conditional mutual information (CMI) is
	\begin{equation}
		\footnotesize
		\CMI(X; Y \mid Z) \;=\;
		\sum_{z \in \calZ} p(z)\,
		\KL\!\bigl(\,p(X, Y \mid z)\,\|\,p(X \mid z)\,p(Y \mid z)\,\bigr).
		\label{eq:cmi-as-kl}
	\end{equation}
	$\CMI \geq 0$ with equality iff $X \perp Y \mid Z$~\citep{cover2006elements}.
	We write $r$ for the maximum alphabet size per variable and, when $Z$ is
	a $d$-dimensional vector, $|\calZ| \leq r^d$. Throughout, $\log$ denotes
	$\log_2$ unless otherwise stated.
	
	\subsection{The oracle model}
	\label{subsec:oracles}
	
	We work in the standard quantum query model. Let $n_q$ denote the number
	of qubits encoding a sample from $\calX \times \calY \times \calZ$.
	
	\begin{definition}[Preparation oracle $\Order_p$]
		\label{def:prep}
		There exists a unitary $\Order_p$ on $n_q + \Order(1)$ qubits\footnote{The
			$\Order(1)$ ancilla qubits are scratch space used internally by the
			oracle's circuit and returned to $|0\rangle$ at the end of the
			computation; they do not appear in the output state.} with
		\begin{equation}
			\Order_p\,|0\rangle \;=\; |\psi_p\rangle \;=\;
			\sum_{(x, y, z) \in \calX \times \calY \times \calZ}
			\sqrt{p(x, y, z)}\,|x, y, z\rangle.
			\label{eq:prep}
		\end{equation}
		This is the standard access model in quantum
		distribution testing and Monte Carlo~\citep{grover2002creating,
			bravyi2011quantum, montanaro2015quantum}.
	\end{definition}
	
	\begin{definition}[Log-ratio arithmetic oracle $\Order_{\log}$]
		\label{def:logoracle}
		Fix a clip bound $L > 0$ and a fixed-point precision
		$b = \Order(\log(L/\tau))$ bits. There exists a reversible unitary
		$\Order_{\log}^{p,q}$ acting on the $n_q$-qubit sample register and a
		$b$-qubit output register, such that for every basis state
		$|w\rangle$ with $w \in \calX \times \calY \times \calZ$,
		\begin{align}
			&\Order_{\log}^{p,q}\,|w\rangle\,|0\rangle_b \;=\;
			|w\rangle\,\bigl|\mathrm{clip}_L\!\bigl(\log_2(p(w)/q(w))\bigr)\bigr\rangle_b,
			\\
			&\mathrm{clip}_L(u) := \max(-L, \min(L, u)) \text{ for } u \in \mathbb{R}.
			\label{eq:logoracle}
		\end{align}
		The sample register is left unchanged and the clipped log-ratio is
		written, to $b$ bits of fixed-point precision, into the previously-zero
		output register. Writing
		$\ell_L(w) := \mathrm{clip}_L\!\bigl(\log_2(p(w)/q(w))\bigr)$
		for the clipped log-ratio, linearity of $\Order_{\log}^{p,q}$ gives
		\begin{equation*}
			\Order_{\log}^{p,q}\Bigl(\sum_w \alpha_w |w\rangle\Bigr)|0\rangle_b
			\;=\; \sum_w \alpha_w |w\rangle\,|\ell_L(w)\rangle_b,
		\end{equation*}
		so the log-ratio is evaluated coherently across all samples in a single
		query.
	\end{definition}
	
	We interpret the unclipped log-ratio with the conventions
	$\log_2(p(w)/0)=+\infty$ when $p(w)>0$,
	$\log_2(0/q(w))=-\infty$ when $q(w)>0$,
	and assign an arbitrary fixed value (e.g.\ $0$) when $p(w)=q(w)=0$,
	since such basis states never occur under the preparation oracle for $p$.
	After clipping, these cases map to $+L$, $-L$, and the chosen dummy value,
	respectively.
	
	The map $|w\rangle|0\rangle \mapsto |w\rangle|f(w)\rangle$ is reversible
	for any function $f$ because $|w\rangle$ is preserved. Reversibility is
	essential because Algorithm~\ref{alg:qkla} applies
	$(\Order_{\log}^{p,q})^\dagger$ in step~6 to uncompute the log-ratio
	register; without uncomputation, the arithmetic register would remain
	entangled with the sample register and the amplitude encoding of step~5
	would not produce the target state.
	
	The output register discretizes $[-L,+L]$ into $2^b$ levels with
	resolution $\Order(2L/2^b)$. If the clipped log-ratio $\ell_L$ is
	represented to additive error at most
	$\Delta_b = \Order(2L/2^b)$ uniformly over basis states, then the
	induced error in $\KLL = \bbE_p[\ell_L]$ is also at most $\Delta_b$.
	Therefore it suffices to choose $b$ so that $\Delta_b \lesssim \tau$,
	i.e.\ $b = \Order(\log(L/\tau))$.
	
	The clip ensures the output value lies in a bounded interval
	$[-L,+L]$, so the subsequent affine map
	$g_L(w) = (\ell_L(w)+L)/(2L)\in[0,1]$ (Eq.~\ref{eq:g-def}) yields
	a valid amplitude. Without the clip,
	$\log_2(p(w)/q(w))$ could be arbitrarily large in magnitude whenever
	$p$ or $q$ assigns small mass to some $w$, and no bounded amplitude
	encoding would exist. The resulting difference relative to the true KL
	is controlled by Lemma~\ref{lem:bias} when $\supp(p)\subseteq\supp(q)$.
	
	Definitions~\ref{def:prep}--\ref{def:logoracle} are standard
	access models in quantum distribution testing and Monte
	Carlo~\citep{bravyi2011quantum, montanaro2015quantum,
		gilyen2020distributional}.
	We discuss gate-level implementations in Section~\ref{sec:discussion}.
	Our bounds count oracle queries; under standard sparse-access or QRAM input
	models, gate-level implementations incur polylogarithmic overhead per
	query~\citep{berry2015hamiltonian}.
	
	We use canonical phase-estimation-based QAE~\citep{brassard2002quantum};
	iterative variants~\citep{grinko2021iterative, suzuki2020amplitude} yield
	identical asymptotics with different constants, and our results carry
	over by substitution.
	
	\begin{theorem}[Amplitude estimation, {\citep[Theorem~12]{brassard2002quantum}}]
		\label{thm:qae}
		Let $\mathcal{A}$ be a unitary on $n_q + 1$ qubits with
		\begin{equation*}
			\mathcal{A}\,|0\rangle|0\rangle \;=\;
			\sqrt{1-a}\,|\phi_0\rangle|0\rangle
			\;+\; \sqrt{a}\,|\phi_1\rangle|1\rangle
		\end{equation*}
		for some $a \in [0, 1]$. For any $M = 2^t \in \bbN$, there is a
		quantum algorithm (\textsc{EstAmp}) that makes $M$ applications of
		the Grover operator
		$\mathcal{G} = -\mathcal{A}\,S_0\,\mathcal{A}^{\dagger}\,S_{\chi}$
		(where $S_0, S_{\chi}$ are reflections), followed by an inverse
		quantum Fourier transform on $t$ qubits, and outputs
		$\hat a = \sin^{2}(\pi m / M)$ for a measured integer
		$m \in \{0, \ldots, M - 1\}$ satisfying
		\begin{equation}
			|\hat a - a|
			\;\leq\; \frac{2\pi \sqrt{a(1-a)}}{M} \;+\; \frac{\pi^{2}}{M^{2}}
			\label{eq:qae-error}
		\end{equation}
		with probability at least $8/\pi^{2} \approx 0.811$. If $a = 0$
		then $\hat a = 0$ with certainty, and if $a = 1$ and $M$ is even
		then $\hat a = 1$ with certainty.
	\end{theorem}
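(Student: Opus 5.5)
The plan is to reproduce the phase-estimation analysis of Brassard--H{\o}yer--Mosca--Tapp, starting from the spectral structure of $\mathcal{G}$. Write $a=\sin^2(\pi\theta_a)$ with $\theta_a\in[0,1/2]$. First I will show that $\mathcal{A}|0\rangle|0\rangle$ lies in the two-dimensional subspace spanned by $|\psi_1\rangle=|\phi_1\rangle|1\rangle$ and $|\psi_0\rangle=|\phi_0\rangle|0\rangle$. The operator $\mathcal{G}=-\mathcal{A}S_0\mathcal{A}^\dagger S_\chi$ leaves this plane invariant and acts on it as a rotation by angle $2\pi\theta_a$; it is a product of two reflections, about the good/bad split and about the initial state. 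This gives eigenvectors $|\psi_\pm\rangle$ with eigenvalues $e^{\pm 2\pi i\theta_a}$. The initial state decomposes as $\mathcal{A}|0\rangle = \tfrac{-i}{\sqrt2}(e^{i\pi\theta_a}|\psi_+\rangle - e^{-i\pi\theta_a}|\psi_-\rangle)$. This step is a direct two-by-two computation.

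Second, I will run phase estimation: prepare the uniform superposition over $j\in\{0,\dots,M-1\}$ on $t$ qubits, apply $\mathcal{G}^j$ controlled on $|j\rangle$ (total $M-1\le M$ applications), and then apply $F_M^{-1}$. On each eigen-branch the measured $m$ then follows the standard Fejér-kernel distribution around $M\theta_a$ or $M(1-\theta_a)$. Since $\sin^2(\pi m/M)$ is symmetric under $m\mapsto M-m$, both branches give estimates of the same quantity $a$. The key lemma is that with probability at least $8/\pi^2$ the outcome satisfies $|m/M-\tilde\theta|\le 1/M$ for $\tilde\theta\in\{\theta_a,1-\theta_a\}$. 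I expect this to be the main obstacle. I will try the usual argument: the two nearest integers $\lfloor M\tilde\theta\rfloor$ and $\lceil M\tilde\theta\rceil$ have total probability $\frac{\sin^2(\pi\Delta)}{M^2\sin^2(\pi\Delta/M)}+\frac{\sin^2(\pi\Delta)}{M^2\sin^2(\pi(1-\Delta)/M)}$, where $\Delta$ is the fractional offset. Using $\sin x\le x$ bounds this below by $\frac{\sin^2(\pi\Delta)}{\pi^2}(\Delta^{-2}+(1-\Delta)^{-2})$, and this is minimized at $\Delta=1/2$, giving $8/\pi^2$.

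Third, I will convert phase error to amplitude error with a trigonometric identity. Set $\hat\theta=m/M$ and $\varepsilon=\pi|\hat\theta-\theta_a|\le \pi/M$. Then
\begin{equation*}
|\sin^2(\pi\hat\theta)-\sin^2(\pi\theta_a)| = |\sin(\pi(\hat\theta+\theta_a))\sin(\pi(\hat\theta-\theta_a))|.
\end{equation*}
Expanding $\sin(2\pi\theta_a\pm\varepsilon)$ and using $|\sin\varepsilon|\le\varepsilon$ together with $\sin(2\pi\theta_a)=2\sqrt{a(1-a)}$ bounds the right side by $2\sqrt{a(1-a)}\,\varepsilon+\varepsilon^2$. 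Substituting $\varepsilon\le\pi/M$ gives the bound in Eq.~\eqref{eq:qae-error}. The reflected branch $1-\theta_a$ gives the same bound by symmetry.

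Finally, I will handle the degenerate cases. If $a=0$ then $\theta_a=0$, $\mathcal{G}$ fixes the initial state up to phase $1$, and phase estimation returns $m=0$ deterministically, so $\hat a=0$. If $a=1$ then $\theta_a=1/2$ and the eigenphase is exactly $M/2$, which is an integer when $M$ is even. The QFT then returns $m=M/2$ with certainty, giving $\hat a=\sin^2(\pi/2)=1$.
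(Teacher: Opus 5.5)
Your proposal is correct and is essentially the original Brassard--H\o yer--Mosca--Tapp proof: the rotation structure of $\mathcal{G}$ on the invariant plane, the $8/\pi^2$ Fej\'er-kernel bound on the two nearest outcomes, the $\sin^2$ phase-to-amplitude conversion, and the exact eigenphases $0$ and $1/2$ in the degenerate cases. The paper gives no proof of its own and simply imports the statement as Theorem~12 of that reference, so this is the argument it relies on; the one step you assert rather than verify, that $\sin^2(\pi\Delta)\bigl(\Delta^{-2}+(1-\Delta)^{-2}\bigr)$ is minimized at $\Delta=1/2$, is also stated without proof there.
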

	
	\begin{remark}[Median amplification]
		\label{rem:median-amp}
		Running $k$ independent copies of \textsc{EstAmp} and taking the
		median of the outputs drives the failure probability of the bound
		in~\eqref{eq:qae-error} to at most
		$\exp\bigl(-k \cdot D(\tfrac{1}{2} \| 8/\pi^{2})\bigr)
		\leq \exp(-0.24\,k)$
		by a standard Chernoff
		argument~\citep[Ch.~4]{mitzenmacher2017probability}, where
		$D(\cdot\|\cdot)$ denotes binary KL divergence in nats. In
		particular, $k = \lceil 5 \ln(1/\delta) \rceil$ independent runs
		suffice for failure probability at most $\delta$.
	\end{remark}
	
	Equation~\eqref{eq:qae-error} is the exact Brassard--H\o yer--Mosca--Tapp
	bound; the cleaner weakening
	$|\hat a - a| \leq (\pi/M)(2\sqrt{a(1-a)} + \pi/M)$ costs only
	multiplicative constants. The factor $\sqrt{a(1-a)}$ controls the dominant
	error term: it is largest at $a = 1/2$ (where it equals $1/2$) and
	vanishes at the endpoints $a \in \{0,1\}$. This motivates our encoding
	choice in Section~\ref{sec:algorithm}. Exact conditional independence
	implies $\KLL = 0$, which by Equation~\eqref{eq:klL} maps to $a = 1/2$,
	precisely the worst case of the QAE error bound. We therefore encode the
	full interval $a \in [0,1]$, so that the independence case is included
	without any special-case treatment.
	
	\section{The \textsc{QKLA} Algorithm}
	\label{sec:algorithm}
	
	\subsection{Encoding a KL divergence as a bounded amplitude}
	\label{subsec:encoding}
	
	Fix a clip bound $L > 0$. For distributions $p, q$ on $\calX$ define
	\begin{align}
		&\ell_L(x) \;=\; \mathrm{clip}_L\!\bigl(\log_2(p(x)/q(x))\bigr), \nonumber \\
		&g_L(x) \;=\; \frac{\ell_L(x) + L}{2L} \in [0, 1]. \label{eq:g-def}
	\end{align}
	
	The clipped KL expectation is
	\begin{equation}
		\KLL(p \| q) \;:=\; \sum_{x \in \calX} p(x)\,\ell_L(x)
		\;=\; 2L\,\bbE_p[g_L] - L.
		\label{eq:klL}
	\end{equation}
	The clipping difference $\eta_L(p, q) := \KL(p \| q) - \KLL(p \| q)$ is
	controlled by the following lemma when $\supp(p)\subseteq\supp(q)$.
	
	\begin{lemma}[Clipping bias]
		\label{lem:bias}
		Assume $\supp(p)\subseteq\supp(q)$, and let
		$\rho(x) := \log_2(p(x)/q(x))$ for $x \in \supp(p)$. Then the clipping
		difference $\eta_L(p, q) := \KL(p \| q) - \KLL(p \| q)$ satisfies
		\begin{equation}
			|\eta_L(p, q)| \;\leq\;
			\bbE_p\!\bigl[(|\rho(X)| - L)\,\mathbf{1}\{|\rho(X)| > L\}\bigr],
			\label{eq:bias-twosided}
		\end{equation}
		and hence
		\begin{equation}
			|\eta_L(p, q)| \;\leq\;
			\Bigl(\log_2(1/p_{\min}) + \log_2(1/q_{\min})\Bigr)\,
			\Pr_{X \sim p}\!\bigl[|\rho(X)| > L\bigr],
		\end{equation}
		where
		$p_{\min} = \min_{x \in \supp(p)} p(x)$ and
		$q_{\min} = \min_{x \in \supp(q)} q(x)$.
		In particular, if
		$2^{-L}\,p(x) \leq q(x) \leq 2^{L}\,p(x)$ for every $x \in \supp(p)$
		(equivalently, $|\rho(x)| \leq L$ on $\supp(p)$), then
		$\eta_L(p, q) = 0$ exactly and $\KLL(p\|q) = \KL(p\|q)$.
	\end{lemma}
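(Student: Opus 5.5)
The plan is to work pointwise on $\supp(p)$ and then take $p$-expectations. Because $\supp(p)\subseteq\supp(q)$, every $x$ with $p(x)>0$ also has $q(x)>0$, so $\rho(x)=\log_2(p(x)/q(x))$ is a finite real number there. Points with $p(x)=0$ contribute zero to both $\KL(p\|q)$ and $\KLL(p\|q)$ in \eqref{eq:klL}, whatever dummy value $\ell_L$ takes there. I would therefore first write
\begin{equation*}
\eta_L(p,q)=\sum_{x\in\supp(p)} p(x)\bigl(\rho(x)-\mathrm{clip}_L(\rho(x))\bigr).
\end{equation*}

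The first step is the elementary scalar identity $|u-\mathrm{clip}_L(u)|=(|u|-L)\,\mathbf{1}\{|u|>L\}$ for all real $u$. I would check it in three cases:
\begin{itemize}
\item if $|u|\le L$, the clip is the identity and both sides vanish;
\item if $u>L$, then $u-L=|u|-L$;
\item if $u<-L$, then $u-(-L)=-(|u|-L)$, whose absolute value is $|u|-L$.
\end{itemize}
Applying the triangle inequality to the sum above and substituting this identity gives \eqref{eq:bias-twosided} immediately.

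The second step bounds $|\rho(x)|$ uniformly on $\supp(p)$. Since $p_{\min}\le p(x)\le 1$ and $q_{\min}\le q(x)\le 1$, we have $\log_2 p(x)\in[-\log_2(1/p_{\min}),0]$ and $-\log_2 q(x)\in[0,\log_2(1/q_{\min})]$. Hence
\begin{equation*}
-\log_2(1/p_{\min})\le\rho(x)\le\log_2(1/q_{\min}),
\end{equation*}
and so $|\rho(x)|\le\log_2(1/p_{\min})+\log_2(1/q_{\min})$. Because $L>0$, we also have $(|\rho|-L)\mathbf{1}\{|\rho|>L\}\le|\rho|\,\mathbf{1}\{|\rho|>L\}$. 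Bounding $|\rho|$ by the constant above and taking the $p$-expectation of the indicator yields the second displayed inequality with the factor $\Pr_{X\sim p}[|\rho(X)|>L]$.

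For the final claim, the condition $2^{-L}p(x)\le q(x)\le 2^{L}p(x)$ on $\supp(p)$ is, after taking $\log_2$, exactly $|\rho(x)|\le L$. So the indicator vanishes $p$-almost surely, the right side of \eqref{eq:bias-twosided} is zero, and $\KLL=\KL$.

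I expect no step to be a genuine obstacle. The only points needing care are bookkeeping ones: making sure the off-support terms (where $\ell_L$ takes the dummy or $\pm L$ value) carry zero $p$-weight, and using $L>0$ when dropping the $-L$ term. The constant $\log_2(1/p_{\min})+\log_2(1/q_{\min})$ is loose, since a maximum would suffice, but it matches the statement.
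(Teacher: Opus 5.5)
Your proof is correct and is essentially the paper's argument. The paper splits $\eta_L$ into a nonnegative part $\eta_L^{+}$ (from $\rho>L$) and a nonpositive part $\eta_L^{-}$ (from $\rho<-L$) and uses $|\eta_L^{+}+\eta_L^{-}|\le \eta_L^{+}-\eta_L^{-}$, which amounts to your pointwise identity $|u-\mathrm{clip}_L(u)|=(|u|-L)\mathbf{1}\{|u|>L\}$ plus the triangle inequality; for the second bound it uses $|\log_2 p(x)-\log_2 q(x)|\le|\log_2 p(x)|+|\log_2 q(x)|$, whereas your one-sided range argument gives the slightly sharper $\max\{\log_2(1/p_{\min}),\log_2(1/q_{\min})\}$, which of course implies the stated sum.
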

	
	\begin{proof}
		The two-sided sufficient condition is immediate: if $|\rho(x)| \leq L$ on
		$\supp(p)$, then the clip is inactive and $\ell_L(x) = \rho(x)$ pointwise,
		giving $\KLL = \KL$.
		
		For the general case, write
		\[
		\rho - \mathrm{clip}_L(\rho) \;=\;
		\begin{cases}
			\rho - L & \text{if } \rho > L, \\
			0        & \text{if } -L \leq \rho \leq L, \\
			\rho + L & \text{if } \rho < -L.
		\end{cases}
		\]
		Taking expectation under $p$ and using linearity,
		\begin{align}
			\eta_L(p,q)
			&= \bbE_p\!\bigl[\rho - \mathrm{clip}_L(\rho)\bigr] \notag \\
			&= \underbrace{\bbE_p\!\bigl[(\rho - L)\,\mathbf{1}\{\rho > L\}\bigr]}_{=:\, \eta_L^{+} \,\geq\, 0}
			\;+\; \underbrace{\bbE_p\!\bigl[(\rho + L)\,\mathbf{1}\{\rho < -L\}\bigr]}_{=:\, \eta_L^{-} \,\leq\, 0}.
			\label{eq:bias-decomp}
		\end{align}
		The two contributions have opposite signs, so
		\[
		|\eta_L(p, q)| \;\leq\; \eta_L^{+} - \eta_L^{-}
		\;=\; \bbE_p\!\bigl[(|\rho| - L)\,\mathbf{1}\{|\rho| > L\}\bigr],
		\]
		which is~\eqref{eq:bias-twosided}. For the second bound, for every
		$x \in \supp(p)\subseteq\supp(q)$,
		\begin{align}
			|\rho(x)|
			&= \bigl|\log_2 p(x) - \log_2 q(x)\bigr| \notag\\
			&\le |\log_2 p(x)| + |\log_2 q(x)| \notag\\
			&\le \log_2(1/p_{\min}) + \log_2(1/q_{\min}).
		\end{align}
		Hence $|\rho(x)| - L \le |\rho(x)|$ on the event $\{|\rho|>L\}$, and
		pulling the constant outside the expectation gives the claim.
		
		If $\supp(p)\not\subseteq\supp(q)$, then $\KL(p\|q)=+\infty$, so a
		finite-error comparison to $\KL$ is not meaningful; in that case
		\textsc{QKLA} estimates $\KLL$ as defined.
	\end{proof}
	
	\begin{remark}
		$\KLL(p\|q)$ is not a formal divergence: it can be negative, since clipping can cap rare large
		positive log-ratios while leaving moderate negative ones. Also, it is not the image of $\KL$ under any data-processing channel
		on $(p, q)$. It is a clipped log-ratio expectation chosen for its bounded
		amplitude encoding (Section~\ref{subsec:encoding}); the relationship to
		$\KL$ is the bound of Lemma~\ref{lem:bias}.
	\end{remark}
	
	The lemma justifies treating $L$ as a design parameter: choose $L$ at
	least as large as $\max_x |\log_2(p(x)/q(x))|$ to eliminate the
	difference. For the benchmarks of Section~\ref{sec:experiments},
	$L = 3$ exactly covers the ASIA log-ratios; $L = 6$ would be needed to
	exactly cover all $K = 20$ random distributions of Experiment~2, and we
	use $L = 3$ throughout, accepting a small clipping difference on the most
	extreme of those 20 distributions (Lemma~\ref{lem:bias}). When $L$
	exceeds the log-ratio range, $\eta_L = 0$ exactly and $\KLL = \KL$,
	so CI testing via the clipped expectation coincides with CI testing
	via the true KL.
	
	For CI testing via $\KLL$, the relevant implication is: if $\KL = 0$ then
	$\KLL = 0$ exactly (since $\rho \equiv 0$ on $\supp(p)$ and the clip is
	inactive), and conversely if $|\KLL| \leq \tau$ then
	$\KL \leq \tau + |\eta_L|$ by Lemma~\ref{lem:bias}. So the clipped
	expectation remains a sound CI statistic: independence is detected exactly,
	and small $|\KLL|$ implies small $\KL$ up to the controlled clipping
	difference.
	
	\subsection{The algorithm and its analysis}
	
	\begin{algorithm*}[!t]
		\caption{\textsc{QKLA}: Quantum KL Divergence Amplitude Estimation}
		\label{alg:qkla}
		\begin{algorithmic}[1]
			\Require Oracle $\Order_p$ (Def.~\ref{def:prep}); oracle
			$\Order_{\log}^{p,q}$ (Def.~\ref{def:logoracle}); precision $\tau > 0$;
			confidence $\delta \in (0, 1)$; clip bound $L$.
			\Ensure $\widehat\KL$ with $|\widehat\KL - \KLL(p \| q)| \leq \tau$ with
			probability $\geq 1 - \delta$.
			\State $M \gets 2^{\lceil \log_2 \lceil 4\pi L / \tau \rceil \rceil}$;\quad
			$k \gets \lceil 5 \ln(1/\delta) \rceil$.
			\For{$j = 1, \ldots, k$}
			\State Prepare $|\psi_p\rangle \otimes |0\rangle_b \otimes |0\rangle$
			via $\Order_p$, where $|0\rangle_b$ is a $b$-bit arithmetic
			register and the final qubit is the amplitude ancilla.
			\State Apply $\Order_{\log}^{p,q}$: compute $\ell_L(x)$ into the
			arithmetic register.
			\State Apply the uniformly controlled rotation
			$R_y(2\arcsin\sqrt{g_L(\cdot)})$ on the amplitude ancilla,
			controlled by the arithmetic register.
			\State Apply $(\Order_{\log}^{p,q})^\dagger$ to uncompute the
			arithmetic register.
			\State Let $\mathcal{A}_j$ denote the composed unitary. Run
			QAE (Theorem~\ref{thm:qae}) with $M$ Grover iterations of
			$\mathcal{G}_j = -\mathcal{A}_j\,S_0\,\mathcal{A}_j^\dagger\,S_\chi$;
			measure the $t$-qubit phase register to obtain $m_j$; set
			$\hat a^{(j)} = \sin^2(\pi m_j / M)$.
			\EndFor
			\State $\hat a \gets \mathrm{median}\{\hat a^{(1)}, \ldots, \hat a^{(k)}\}$.
			\State \Return $\widehat\KL \gets 2L\,\hat a - L$.
		\end{algorithmic}
	\end{algorithm*}
	
	\begin{theorem}[Per-call query complexity]
		\label{thm:qkla}
		Under Definitions~\ref{def:prep}--\ref{def:logoracle},
		Algorithm~\ref{alg:qkla} outputs $\widehat\KL$ satisfying
		\begin{equation}
			|\widehat\KL - \KLL(p \| q)| \;\leq\; \tau
			\quad \text{with probability at least } 1 - \delta,
			\label{eq:qkla-bound}
		\end{equation}
		using
\begin{align}
	T_{\mathrm{QKLA}}(\tau,\delta)
	&= M \cdot k \notag\\
	&\le 2\lceil 4\pi L/\tau\rceil \cdot \lceil 5\ln(1/\delta)\rceil \notag\\
	&= \Order\!\bigl((L/\tau)\log(1/\delta)\bigr).
\end{align}
		calls to the composed unitary $\mathcal{A}$ and hence the same order
		of queries to the base oracles $\Order_p$ and $\Order_{\log}^{p,q}$
		up to constant factors. If $\supp(p)\subseteq\supp(q)$, then
		\[
		|\widehat\KL - \KL(p\|q)| \le \tau + |\eta_L(p,q)|.
		\]
		Otherwise $\KL(p\|q)=+\infty$, and $\widehat\KL$ estimates $\KLL$ as
		defined.
	\end{theorem}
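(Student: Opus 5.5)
The proof has four parts: show that $\mathcal{A}$ has the form Theorem~\ref{thm:qae} requires with $a=\bbE_p[g_L]$, convert the QAE error bound into an additive bound on $\widehat\KL$, amplify confidence with Remark~\ref{rem:median-amp}, and count queries.

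\textbf{Step 1: the composed unitary.} Write $\mathcal{A}=(\Order_{\log}^{p,q})^\dagger\, R\, \Order_{\log}^{p,q}\,\Order_p$. By Definitions~\ref{def:prep}--\ref{def:logoracle}, after step~4 the state is $\sum_w\sqrt{p(w)}\,|w\rangle|\ell_L(w)\rangle_b|0\rangle$. The uniformly controlled rotation then maps the ancilla to $\sqrt{1-g_L(w)}|0\rangle+\sqrt{g_L(w)}|1\rangle$. Since $\Order_{\log}^{p,q}$ acts only on the sample and arithmetic registers, uncomputation returns the arithmetic register to $|0\rangle_b$ in every branch. The result is
\[
\mathcal{A}|0\rangle=\sqrt{1-a}\,|\phi_0\rangle|0\rangle+\sqrt{a}\,|\phi_1\rangle|1\rangle,
\qquad a=\sum_w p(w)\,g_L(w)=\bbE_p[g_L].
\]
This is the hypothesis of Theorem~\ref{thm:qae}. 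By Eq.~\eqref{eq:klL}, $\KLL=2La-L$, so $|\widehat\KL-\KLL|=2L|\hat a-a|$. I would treat the fixed-point discretization as exact, or absorb $\Delta_b\lesssim\tau$ into constants as the paper does after Definition~\ref{def:logoracle}.

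\textbf{Step 2: the error bound.} With $M\ge\lceil 4\pi L/\tau\rceil$ and $\sqrt{a(1-a)}\le 1/2$, Eq.~\eqref{eq:qae-error} gives
\[
|\hat a^{(j)}-a|\le \frac{\pi}{M}+\frac{\pi^2}{M^2}
\]
with probability at least $8/\pi^2$. Multiplying by $2L$ gives $2\pi L/M+2\pi^2L/M^2\le \tau/2+\pi^2\tau^2/(8\pi^2L)=\tau/2+\tau^2/(8L)$. This is at most $\tau$ whenever $\tau\le 4L$. That condition is harmless: if $\tau\ge 2L$, any output in $[-L,L]$ trivially satisfies the bound, since $\KLL\in[-L,L]$.

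This constant check is the main delicate point. The second-order term $\pi^2/M^2$ must fit inside the slack left by the first-order term, which is why the constant $4\pi$ rather than $2\pi$ is needed. If the arithmetic came out too tight, I would use the weakened form $(\pi/M)(2\sqrt{a(1-a)}+\pi/M)$ and note $\pi/M\le\tau/(4L)\le 1$.

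\textbf{Step 3: confidence amplification.} The median of $k=\lceil5\ln(1/\delta)\rceil$ independent runs is within the same bound of $a$ except with probability $\le\delta$, by Remark~\ref{rem:median-amp}. The key observation is that if more than half the $\hat a^{(j)}$ lie in an interval around $a$, then so does their median. Since $\widehat\KL=2L\hat a-L$ is affine, this gives \eqref{eq:qkla-bound}.

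\textbf{Step 4: query count and the KL statement.} For the query count, $M=2^{\lceil\log_2 N\rceil}<2N$ with $N=\lceil4\pi L/\tau\rceil$. Each Grover iterate uses $\mathcal{A}$ and $\mathcal{A}^\dagger$ once each, and each of these uses $\Order_p$ once and $\Order_{\log}^{p,q}$ twice. This gives $M\cdot k$ calls and the stated $\Order((L/\tau)\log(1/\delta))$ base-oracle queries. Finally, when $\supp(p)\subseteq\supp(q)$, the triangle inequality with $\eta_L=\KL-\KLL$ (Lemma~\ref{lem:bias}) gives $|\widehat\KL-\KL|\le\tau+|\eta_L|$. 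Otherwise $\KL=+\infty$ by convention, and the guarantee concerns $\KLL$ only.
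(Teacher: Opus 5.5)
Your proposal is correct and follows essentially the same route as the paper: after uncomputation you get $a=\bbE_p[g_L]$, then $\sqrt{a(1-a)}\le 1/2$ together with $M\ge 4\pi L/\tau$ gives $2L|\hat a^{(j)}-a|\le\tau$ per run, the median of $k=\lceil 5\ln(1/\delta)\rceil$ runs amplifies the confidence, and the triangle inequality with Lemma~\ref{lem:bias} finishes the proof. The only difference is minor. The paper absorbs the $\pi^2/M^2$ term via $\pi^2/M^2\le\pi/M$, which it justifies by $L\ge 1$ and $\tau\le 1$. You instead keep that term explicit, which yields the side condition $\tau\le 4L$, and you cover $\tau\ge 2L$ trivially because $\widehat\KL,\KLL\in[-L,L]$; if anything, this is slightly cleaner.
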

	
	\begin{proof}
		After step~6, the state is
		\[
		\sum_x \sqrt{p(x)}\,|x\rangle\,|0\rangle_b\,
		\bigl(\sqrt{1 - g_L(x)}\,|0\rangle + \sqrt{g_L(x)}\,|1\rangle\bigr),
		\]
		because the uncomputation disentangles the arithmetic register when
		$\Order_{\log}$ is reversible. The probability of measuring $|1\rangle$
		on the amplitude ancilla is therefore
		$a := \bbE_p[g_L] \in [0, 1]$, and by~\eqref{eq:klL},
		$\KLL(p \| q) = 2L\,a - L$.
		
		A single QAE run with
		$M = 2^{\lceil \log_2 \lceil 4\pi L / \tau \rceil \rceil}$
		(a power of $2$ as required by Theorem~\ref{thm:qae}, satisfying
		$4\pi L/\tau \leq M \leq 2\lceil 4\pi L/\tau \rceil$) returns
		$\hat a^{(j)}$ satisfying
		\[
		|\hat a^{(j)} - a| \leq
		\frac{2\pi \sqrt{a(1-a)}}{M} + \frac{\pi^2}{M^2}
		\leq \frac{\pi}{M} + \frac{\pi^2}{M^2}
		\leq \frac{2\pi}{M}
		\leq \frac{\tau}{2L}
		\]
		with probability $\geq 8/\pi^2$, using $\sqrt{a(1-a)} \leq 1/2$ for the
		first inequality and $\pi^2/M^2 \leq \pi/M$ (equivalent to $M \geq \pi$,
		satisfied since $M \geq 4\pi L/\tau \geq 4\pi$ for $L \geq 1$, $\tau \leq 1$)
		for the second inequality. Therefore
		\[
		|\widehat\KL^{(j)} - \KLL| = 2L\,|\hat a^{(j)} - a| \leq \tau,
		\]
		on the single-run success event.
		
		Each of the $k = \lceil 5 \ln(1/\delta) \rceil$ runs succeeds
		independently with probability $q_0 = 8/\pi^2 > 1/2$. Let $S_k$ be the
		number of successes. The median of
		$\{\widehat\KL^{(1)}, \ldots, \widehat\KL^{(k)}\}$ falls inside
		$[\KLL - \tau, \KLL + \tau]$ whenever at least $\lceil k/2 \rceil$
		runs succeed, so
		\begin{align*}
			\bbP[\text{failure}]
			\;&\leq\; \bbP[S_k \leq k/2]
			\;\leq\; \exp  \bigl(-k \cdot D(1/2 \,\|\, 8/\pi^2)\bigr) \\
			&\leq\; \exp(-0.243\,k)
		\end{align*}
		by the Chernoff--Cram\'er bound~\cite{mitzenmacher2017probability}, where
		$D(1/2 \,\|\, 8/\pi^2) \approx 0.24$ nats is computed directly.
		Choosing $k = \lceil 5 \ln(1/\delta) \rceil$ gives failure
		probability at most
		\[
		\exp \bigl(-0.24 \cdot 5 \cdot \ln(1/\delta)\bigr)
		\;=\; \delta^{0.24 \cdot 5}
		\;\approx\; \delta^{1.2}
		\;\leq\; \delta
		\]
		as claimed.
		
		The query count is $M$ Grover iterations per run, giving
		$M \cdot k$ calls to $\mathcal{A}$ up to constant factors.
		If $\supp(p)\subseteq\supp(q)$, the total error with respect to
		$\KL(p \| q)$ adds $|\eta_L(p, q)|$ from Lemma~\ref{lem:bias} via the
		triangle inequality. If $\supp(p)\not\subseteq\supp(q)$, then
		$\KL(p\|q)=+\infty$ and only the $\KLL$ guarantee is meaningful.
	\end{proof}
	
	\begin{remark}[Power-of-two rounding and iterative QAE]
		\label{rem:power-of-two}
		The rounding
		$M \gets 2^{\lceil \log_2 \lceil 4\pi L/\tau \rceil \rceil}$
		in Algorithm~\ref{alg:qkla} satisfies the canonical-QAE hypothesis
		$M = 2^t$ of Theorem~\ref{thm:qae} and the lower bound
		$M \geq 4\pi L/\tau$ used in the proof of Theorem~\ref{thm:qkla}; since
		$M \leq 2\lceil 4\pi L/\tau \rceil$, the rounding at most doubles the
		per-run query count and preserves the $\Order(L/\tau)$ scaling.
		Iterative QAE~\citep{grinko2021iterative} removes the power-of-$2$
		requirement entirely at the cost of an $\Order(\log\log(1/\tau))$
		overhead in total queries, yielding $\widetilde{\Order}(L/\tau)$ per
		call; this is the recommended NISQ-friendly instantiation
		(Section~\ref{sec:discussion}).
	\end{remark}
	
	\begin{remark}[The $\sqrt{a(1-a)}$ factor]
		The bound~\eqref{eq:qae-error} tightens near $a = 0$ and $a = 1$.
		At exact independence, $\KLL = 0$ and therefore $a = 1/2$, the worst
		case of $\sqrt{a(1-a)}$. We state Theorem~\ref{thm:qkla} in this
		uniform worst-case form, which gives a simple instance-independent
		guarantee valid over the entire parameter range. For amplitudes away
		from $a=1/2$, the required single-run value of $M$ can be up to a
		factor of two smaller, so the theorem should be viewed as a
		conservative bound that is often better in favorable regimes.
	\end{remark}
	
	\section{From KL to Conditional Mutual Information}
	\label{sec:cmi}
	
	\begin{algorithm*}[!t]
		\caption{\textsc{QCMIE}: Quantum conditional mutual information estimator}
		\label{alg:qcmi}
		\begin{algorithmic}[1]
			\Require For each $z \in \calZ_+$, preparation oracles for
			$p(X,Y \mid z)$ and $p(X \mid z)\,p(Y \mid z)$, a per-stratum
			log-ratio oracle $\Order_{\log}^{(z)}$, classical access to $p(z)$,
			target precision $\tau$, confidence $\delta$, and clip bound $L$.
			\Ensure $\widehat\CMI$ with
			$|\widehat\CMI - \CMI(X; Y \mid Z)| \leq
			\tau + \max_z |\eta_L^{(z)}|$ w.p.\ $\geq 1 - \delta$.
			\State $\calZ_+ \gets \{z \in \calZ : p(z) > 0\}$.
			\For{each $z \in \calZ_+$}
			\State Run \textsc{QKLA} (Alg.~\ref{alg:qkla}) on the conditional
			distributions $p(X, Y \mid z)$ and
			$p(X \mid z)\,p(Y \mid z)$
			with precision $\tau$ and confidence
			$\delta / |\calZ_+|$; record $\widehat\KL(z)$.
			\EndFor
			\State \Return
			$\widehat\CMI \gets \sum_{z \in \calZ_+} p(z)\,\widehat\KL(z)$.
		\end{algorithmic}
	\end{algorithm*}
	
	\begin{theorem}[Per-test CMI complexity]
		\label{thm:qcmi}
		Assume that for each $z \in \calZ_+$ we are given preparation oracles
		for $p(X,Y \mid z)$ and $p(X \mid z)\,p(Y \mid z)$, a per-stratum
		log-ratio oracle $\Order_{\log}^{(z)}$, and classical access to
		$p(z)$. Then Algorithm~\ref{alg:qcmi} outputs $\widehat\CMI$
		satisfying
		\begin{equation}
			|\widehat\CMI - \CMI(X; Y \mid Z)| \;\leq\; \tau +
			\max_{z \in \calZ_+} |\eta_L^{(z)}|
			\label{eq:qcmi-bound}
		\end{equation}
		with probability at least $1 - \delta$, using
		\begin{equation}
			T_{\mathrm{QCMIE}}(\tau, \delta, |\calZ|) \;=\;
			\Order\!\Bigl(|\calZ| \cdot (L/\tau) \cdot \log(|\calZ|/\delta)\Bigr)
			\label{eq:qcmi-queries}
		\end{equation}
		oracle queries.
	\end{theorem}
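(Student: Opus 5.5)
The proof applies Theorem~\ref{thm:qkla} once per stratum, combines the per-stratum events with a union bound, and then uses the convexity of the weights $p(z)$ to carry the per-stratum errors to the weighted sum. The query count is obtained by summing the per-stratum costs.

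\textbf{Step 1 (per-stratum guarantee).} Fix $z \in \calZ_+$ and invoke Theorem~\ref{thm:qkla} on $p_z := p(X,Y\mid z)$ and $q_z := p(X\mid z)\,p(Y\mid z)$, with confidence $\delta/|\calZ_+|$. This gives an event $E_z$ of probability at least $1-\delta/|\calZ_+|$ on which $|\widehat\KL(z) - \KLL(p_z\|q_z)| \le \tau$.

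The support condition $\supp(p_z)\subseteq\supp(q_z)$ holds automatically. Indeed, if $p(x,y\mid z)>0$, then both marginals $p(x\mid z)$ and $p(y\mid z)$ are positive. The second clause of Theorem~\ref{thm:qkla} therefore applies, and on $E_z$ we get
\[
|\widehat\KL(z) - \KL(p_z\|q_z)| \le \tau + |\eta_L^{(z)}|,
\]
where $\eta_L^{(z)} := \eta_L(p_z,q_z)$.

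\textbf{Step 2 (union bound and aggregation).} By the union bound, $\bigcap_z E_z$ holds with probability at least $1-\delta$. On this event, Eq.~\eqref{eq:cmi-as-kl} restricted to $\calZ_+$ (strata with $p(z)=0$ contribute nothing) and the triangle inequality give
\[
|\widehat\CMI - \CMI| \le \sum_{z\in\calZ_+} p(z)\bigl(\tau + |\eta_L^{(z)}|\bigr) \le \tau + \max_{z\in\calZ_+}|\eta_L^{(z)}|,
\]
since the weights $p(z)$ sum to $1$. The sum $\sum_{z} p(z)\,\tau$ equals exactly $\tau$ because the weights are known classically and are exact. No error budget needs to be split across strata, so each stratum can be run at precision $\tau$ rather than $\tau/|\calZ|$.

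\textbf{Step 3 (query count).} Each stratum costs $T_{\mathrm{QKLA}}(\tau,\delta/|\calZ_+|) \le 2\lceil 4\pi L/\tau\rceil\,\lceil 5\ln(|\calZ_+|/\delta)\rceil$. Summing over at most $|\calZ_+|\le|\calZ|$ strata gives
\[
\Order\bigl(|\calZ|\,(L/\tau)\log(|\calZ|/\delta)\bigr),
\]
which is Eq.~\eqref{eq:qcmi-queries}. The per-stratum preparation oracles and $\Order_{\log}^{(z)}$ are counted as base oracle queries, with the constant-factor overhead already noted in Theorem~\ref{thm:qkla}.

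\textbf{Main obstacle.} The only delicate point is checking the hypotheses of Theorem~\ref{thm:qkla} for $\delta' = \delta/|\calZ_+|$ and for the $M \ge \pi$ condition (which needs $L\ge1$, $\tau\le1$). The median-amplification estimate $\delta'^{1.2}\le\delta'$ requires $\delta'<1$, which holds because $\delta<1$. Everything else is bookkeeping.
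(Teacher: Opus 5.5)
Your proposal is correct and follows the paper's proof essentially step for step. You run QKLA per stratum at precision $\tau$ and confidence $\delta/|\calZ_+|$, take a union bound over strata, aggregate via $\sum_{z\in\calZ_+} p(z)\bigl(\tau+|\eta_L^{(z)}|\bigr)\le \tau+\max_{z}|\eta_L^{(z)}|$ using the exact classical weights, and sum the per-stratum query costs. Your explicit check that $\supp(p(X,Y\mid z))\subseteq\supp(p(X\mid z)\,p(Y\mid z))$ always holds is a small but useful addition: it guarantees every $\eta_L^{(z)}$ is finite, whereas the paper only invokes Lemma~\ref{lem:bias} ``whenever the support condition holds.''
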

	
	\begin{proof}
		For each $z \in \calZ_+$, write
		\[
		a_z := \KLL\bigl(p(X, Y \mid z) \,\|\, p(X \mid z)\,p(Y \mid z)\bigr)
		\]
		for the true clipped KL expectation at stratum $z$, and
		$\widehat\KL(z)$ for the QKLA estimate returned by step~3 of
		Algorithm~\ref{alg:qcmi}. The true CMI decomposes as
		$\CMI = \sum_{z} p(z) \cdot b_z$, where
		$b_z := \KL(p(X, Y \mid z) \,\|\, p(X \mid z)\,p(Y \mid z))$ is the
		unclipped per-stratum KL; the clipping difference
		$\eta_L^{(z)} := b_z - a_z$ is not sign-constrained but is bounded by
		Lemma~\ref{lem:bias} whenever the support condition holds at stratum $z$.
		By assumption, the stratum weights $p(z)$ are available classically
		and used exactly in the aggregation step, so the only stochastic error
		in
		$\widehat\CMI = \sum_z p(z)\,\widehat\KL(z)$
		comes from the QKLA estimates. Here the aggregation step means forming the final weighted sum over strata:
		each per-stratum KL estimate $\widehat\KL(z)$ is multiplied by its
		classically known weight $p(z)$ and then summed over $z \in \calZ_+$.
		
		\paragraph*{Step 1: Per-stratum error decomposition.}
		For each $z$, insert and subtract $a_z$ and apply the triangle inequality:
	\begin{equation}
		\begin{aligned}
			|\widehat\KL(z) - b_z|
			&= |\widehat\KL(z) - a_z + a_z - b_z| \\
			&\leq |\widehat\KL(z) - a_z| + |\eta_L^{(z)}|.
		\end{aligned}
		\label{eq:per-stratum-split}
	\end{equation}
		
		\paragraph*{Step 2: Union bound over strata.}
		Invoke QKLA at each stratum with target precision $\tau$ and confidence
		$\delta/|\calZ_+|$. By Theorem~\ref{thm:qkla}, each call satisfies
		$|\widehat\KL(z) - a_z| \leq \tau$ with failure probability at most
		$\delta/|\calZ_+|$. By the union bound, the event
		\begin{equation}
			\mathcal{E} \;:=\; \Bigl\{\,
			|\widehat\KL(z) - a_z| \leq \tau \ \text{for all}\ z \in \calZ_+
			\,\Bigr\}
			\label{eq:good-event}
		\end{equation}
		has probability at least $1 - \delta$.
		
		\paragraph*{Step 3: Aggregate the per-stratum errors.}
		On the event $\mathcal{E}$, combining~\eqref{eq:per-stratum-split}
		with Step~2 gives, for every $z \in \calZ_+$,
		\[
		|\widehat\KL(z) - b_z| \;\leq\; \tau + |\eta_L^{(z)}|.
		\]
		Now bound the CMI error:
		\begin{align}
			|\widehat\CMI - \CMI|
			\;&=\; \Bigl|\sum_{z \in \calZ_+} p(z)\,(\widehat\KL(z) - b_z)\Bigr| \notag \\
			\;&\leq\; \sum_{z \in \calZ_+} p(z)\,|\widehat\KL(z) - b_z| \notag \\
			&\leq\; \sum_{z \in \calZ_+} p(z)\,\bigl(\tau + |\eta_L^{(z)}|\bigr)
			\leq\; \tau + \max_{z \in \calZ_+} |\eta_L^{(z)}|.
			\label{eq:cmi-aggregate}
		\end{align}
		This establishes~\eqref{eq:qcmi-bound}.
		
		\paragraph*{Step 4: Query count.}
		Each of the $|\calZ_+|$ strata runs QKLA at precision $\tau$ and confidence
		$\delta/|\calZ_+|$. By Theorem~\ref{thm:qkla}, each call uses
		$\Order((L/\tau) \log(|\calZ_+|/\delta))$ oracle queries. Summing
		over the $|\calZ_+|$ strata,
		\begin{align}
			T_{\mathrm{QCMIE}}
			&= |\calZ_+|\,\Order\!\bigl((L/\tau) \log(|\calZ_+|/\delta)\bigr) \notag\\
			&= \Order\!\bigl(|\calZ| \cdot (L/\tau) \cdot \log(|\calZ|/\delta)\bigr),
		\end{align}
		using $|\calZ_+| \le |\calZ|$. This establishes~\eqref{eq:qcmi-queries}.
	\end{proof}
	
\begin{remark}[Classical baseline]
	\label{rem:classical}
	For the fresh-data plug-in baseline used here, PC estimates MI separately
	inside each conditioning stratum. In fixed alphabets, plug-in MI has
	standard Monte Carlo accuracy~\citep{paninski2003estimation}; hence
	$\Theta(1/\tau^2)$ samples per positive-mass, roughly balanced stratum gives
	$\Theta(|\calZ|/\tau^2)$ total samples. This is consistent with the
	quadratic precision dependence appearing in discrete CI testing
	bounds~\citep{canonne2018testing}.
	Dividing this by the quantum query
	count of Theorem~\ref{thm:qcmi} gives a classical-to-quantum ratio
	of $\widetilde{\Omega}(1/(L\tau))$; equivalently, the
	quantum-to-classical ratio is $\widetilde{\Order}(L\tau)$.
	Thus the quantum algorithm improves the $\tau$-dependence from
	$1/\tau^2$ to $1/\tau$---a quadratic speedup in precision---at the
	price of a multiplicative $L$ factor. We
	carry this $L$ explicitly through to the compound PC bound
	(Section~\ref{sec:compound}).
\end{remark}
	
	\begin{remark}[Speedup over estimation-based CI tests]
		\label{rem:chi2}
		The quadratic speedup we prove is over estimation-based CI tests---the
		quantum algorithm estimates the \emph{numerical value} of $\CMI$ to high
		additive precision and then compares to a threshold. This is the correct
		baseline for modern CMI-based CI tests used in causal discovery
		\citep{runge2018conditional, marx2019testing, kraskov2004estimating},
		which do not perform a $\chi^2$ test but threshold a plug-in (or
		nearest-neighbour, or stochastic-complexity) estimate.
	\end{remark}
	
	\section{Compound Complexity for the PC Algorithm}
	\label{sec:compound}
	
	The PC algorithm~\citep{spirtes1991algorithm} performs CI tests
	$\CMI(X, Y \mid Z)$ for ordered pairs $(X, Y)$ and conditioning
	sets $Z$ of increasing size $|Z| = 0, 1, \ldots, d$.
	The worst-case number of tests~\citep{spirtes2000causation} is
	\begin{equation}
		N_{\mathrm{tests}}(n, d) \;=\; \Order(n^{d + 2}).
		\label{eq:ntests}
	\end{equation}
	Pruning (e.g., once an independence is established the edge is removed) reduces this in practice, but we state the worst case for the
	compound bound. Denote by $r$ the maximum per-variable alphabet size.
	
\begin{theorem}[Compound PC complexity]
	\label{thm:compound}
	Let
	\begin{equation}
		\lambda_{\max}
		\;:=\;
		\tau + \max_{(X,Y,Z)\in \mathcal{T}_{\mathrm{PC}}}\;
		\max_{z \in \calZ_+^{\,X,Y\mid Z}} |\eta_{L,X,Y\mid Z}^{(z)}|,
		\label{eq:lambda-max}
	\end{equation}
	where $\mathcal{T}_{\mathrm{PC}}$ denotes the collection of CI tests
	issued by PC, $\calZ_+^{\,X,Y\mid Z}$ denotes the positive-mass strata
	for the conditioning set of the test $(X,Y,Z)$, and
	$\eta_{L,X,Y\mid Z}^{(z)}$ is the corresponding per-stratum clipping
	difference. Assume the joint distribution satisfies a
	CMI-margin condition, analogous to $\lambda$-strong-faithfulness in the
	Gaussian PC literature~\citep{uhler2013geometry}: for every triple
	$(X, Y, Z)$ tested by PC,
	\[
	\CMI(X; Y \mid Z) \in \{0\} \cup (2\lambda_{\max}, \infty).
	\]
	Let the PC algorithm be executed with
	\textsc{QCMIE} (Algorithm~\ref{alg:qcmi}) as its CI subroutine, with
	the decision rule ``declare $X \perp Y \mid Z$ iff
	$|\widehat\CMI(X;Y\mid Z)| \le \lambda_{\max}$,'' each call targeting
	precision $\tau$ and per-call confidence
	$\delta_0 = \delta / N_{\mathrm{tests}}$, where
	$N_{\mathrm{tests}}$ is the bound from \eqref{eq:ntests}. The total
	number of oracle queries is
		\begin{equation}
			T_{\mathrm{PC}}^{\mathrm{Q}}(n, d, r, \tau, \delta)
			\;=\; \Order\!\left(n^{d+2} \cdot r^d \cdot \frac{L}{\tau}
			\cdot \log\!\frac{n^{d+2} r^d}{\delta}\right),
			\label{eq:quantum-pc}
		\end{equation}
		and the full PC output is correct with probability at least
		$1 - \delta$. A classical execution with the plug-in CMI estimator
		and per-test confidence $\delta_0$ requires at least
		\begin{equation}
			T_{\mathrm{PC}}^{\mathrm{C}}(n, d, r, \tau, \delta)
			\;=\; \widetilde\Omega\!\left(n^{d+2} \cdot \frac{r^d}{\tau^{2}}\right)
			\label{eq:classical-pc}
		\end{equation}
		samples in the fresh-data-per-test regime, where
		$\widetilde\Omega$ absorbs polylogarithmic factors. The
		classical-to-quantum ratio of oracle costs is therefore
		\begin{equation}
			\frac{T_{\mathrm{PC}}^{\mathrm{C}}}{T_{\mathrm{PC}}^{\mathrm{Q}}}
			\;=\; \widetilde\Omega\!\left(\frac{1}{L\tau}\right).
		\end{equation}
	\end{theorem}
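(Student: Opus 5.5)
The argument has three parts: a query count obtained by summing Theorem~\ref{thm:qcmi} over all tests, a correctness argument on a single good event using the margin condition, and a classical lower bound assembled from per-stratum plug-in costs.

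\emph{Query count.} By \eqref{eq:ntests}, PC issues at most $N_{\mathrm{tests}} = \Order(n^{d+2})$ CI tests. For each test $(X,Y,Z)$ with $|Z|\le d$, we have $|\calZ|\le r^d$. Theorem~\ref{thm:qcmi}, run with precision $\tau$ and confidence $\delta_0=\delta/N_{\mathrm{tests}}$, then costs
\[
\Order\bigl(r^d (L/\tau)\log(r^d N_{\mathrm{tests}}/\delta)\bigr)
\]
queries per test. Summing over all tests gives \eqref{eq:quantum-pc}. One point needs care: PC's test sequence is adaptive, so the number of tests actually issued is random. This does no harm, because the bound uses the deterministic worst case $N_{\mathrm{tests}}$ as an upper bound on the number of calls.

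\emph{Correctness.} Consider the event $\mathcal{E}_{\mathrm{PC}}$ that every QCMIE call returns an estimate satisfying \eqref{eq:qcmi-bound}. By Theorem~\ref{thm:qcmi} each call fails with probability at most $\delta_0$, and there are at most $N_{\mathrm{tests}}$ calls, so a union bound gives $\Pr[\mathcal{E}_{\mathrm{PC}}]\ge 1-\delta$. On $\mathcal{E}_{\mathrm{PC}}$, every tested triple satisfies $|\widehat\CMI-\CMI|\le \lambda_{\max}$, by the definition \eqref{eq:lambda-max}. We then split into the two cases allowed by the margin condition.
\begin{itemize}
\item If $\CMI=0$, then $|\widehat\CMI|\le\lambda_{\max}$, so the test correctly declares independence.
\item If $\CMI>2\lambda_{\max}$, then $\widehat\CMI>\lambda_{\max}$, so the test correctly declares dependence.
\end{itemize}
Thus every CI decision coincides with the oracle CI answer. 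PC driven by a perfect CI oracle returns the correct output (the standard PC soundness result, assuming faithfulness), so the full output is correct on $\mathcal{E}_{\mathrm{PC}}$.

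The main obstacle here is adaptivity: which triples get tested depends on earlier random outcomes. I would handle this by fixing, in advance, the set of all triples PC could ever query, of size at most $N_{\mathrm{tests}}$. The union bound is taken over that set, which is legitimate because each call's failure probability is at most $\delta_0$ conditionally on the history. Alternatively, one can note that on the good event the execution path is deterministic, namely the path PC follows with a perfect oracle.

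\emph{Classical bound and ratio.} I would argue from Remark~\ref{rem:classical}. In the fresh-data regime, deciding a test under the margin condition with plug-in estimation requires precision of order $\tau$ in every positive-mass stratum. That costs $\Omega(1/\tau^2)$ samples per stratum, so $\Omega(r^d/\tau^2)$ per test when the strata are roughly balanced and $|\calZ|=r^d$. This applies to worst-case instances in which all $\Theta(n^{d+2})$ tests are actually issued, which gives \eqref{eq:classical-pc} up to the logarithmic confidence factor. The $\widetilde\Omega$ notation is meant to absorb that factor. Dividing \eqref{eq:classical-pc} by \eqref{eq:quantum-pc} cancels $n^{d+2}r^d$, leaves $\tau^{-2}/(L\tau^{-1}) = 1/(L\tau)$, and pushes the logarithms into $\widetilde\Omega$. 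The lower bound is the weakest link: it is really a statement about the plug-in estimator on adversarial instances, and it relies on the cited fixed-support variance lower bound \citep{paninski2003estimation}. I would state it explicitly as such rather than as an information-theoretic lower bound.
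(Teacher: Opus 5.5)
Your proposal is correct and follows the paper's proof essentially step for step: you sum the cost of Theorem~\ref{thm:qcmi} over at most $N_{\mathrm{tests}}$ tests with $|\calZ|\le r^d$ and $\delta_0=\delta/N_{\mathrm{tests}}$, use a union bound with the same two-case margin argument for correctness, read the classical bound off Remark~\ref{rem:classical}, and divide. Your additions tighten points the paper glosses over: the adaptive union bound, the explicit reliance on faithfulness for PC soundness, and the restriction of the lower bound to instances where $\Theta(n^{d+2})$ tests with $|\calZ|=r^d$ are actually issued, which the paper's step $\widetilde\Omega(|\calZ_i|/\tau^2)=\widetilde\Omega(r^d/\tau^2)$ silently assumes. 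However, as in the paper, the classical step remains heuristic: the decision only needs the $p(z)$-weighted aggregate to lie within $\lambda_{\max}$, so ``precision $\tau$ in every stratum'' is the accounting convention of Remark~\ref{rem:classical}, not a proved necessity.
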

	
	\begin{proof}
		We establish the quantum upper bound, the correctness guarantee, the
		classical lower bound, and their ratio in turn.
		
		\paragraph*{Step 1: Quantum upper bound.}
		The PC algorithm issues at most $N_{\mathrm{tests}} \leq
		\mathcal{O}(n^{d+2})$ CI tests by~\eqref{eq:ntests}. The $i$-th test
		conditions on some set $Z_i$ with $|Z_i| \leq d$ variables, each on
		an alphabet of size at most $r$, so
		\begin{equation}
			|\mathcal{Z}_i| \;\leq\; r^{|Z_i|} \;\leq\; r^{d}
			\qquad \text{for all } i.
		\end{equation}
		Invoke \textsc{QCMIE} on the $i$-th test with precision $\tau$ and
		confidence $\delta_0 = \delta/N_{\mathrm{tests}}$. By
		Theorem~\ref{thm:qcmi}, the cost is
		\begin{align}
			&T_{\mathrm{QCMIE}}(\tau, \delta_0, |\mathcal{Z}_i|)
			\;=\; \mathcal{O}\!\left(|\mathcal{Z}_i|
			\cdot \frac{L}{\tau}
			\cdot \log\!\frac{|\mathcal{Z}_i|}{\delta_0}\right) \notag\\
			&\qquad=\; \mathcal{O}\!\left(r^{d}
			\cdot \frac{L}{\tau}
			\cdot \log\!\frac{r^{d} N_{\mathrm{tests}}}{\delta}\right).
		\end{align}
		Summing over the $N_{\mathrm{tests}}$ calls gives~\eqref{eq:quantum-pc}.
		
		\paragraph*{Step 2: Correctness under the margin assumption.}
		By Theorem~\ref{thm:qcmi}, the $i$-th QCMIE call, corresponding to some
		test $(X_i,Y_i,Z_i)$, is within
		\[
		\tau + \max_{z \in \calZ_{+,i}} |\eta_{L,i}^{(z)}|
		\;\le\; \lambda_{\max}
		\]
		of the corresponding true CMI value with failure probability at most
		$\delta_0$, where $\calZ_{+,i}$ and $\eta_{L,i}^{(z)}$ denote the
		positive-mass strata and per-stratum clipping differences for the
		$i$-th test.
		By a union bound, with probability at
		least $1-\delta$, every CI estimate produced during the PC run lies
		within $\lambda_{max}$ of its true value.
		
		On this event, every CI decision is correct under the stated
		$\lambda_{max}$-strong-faithfulness assumption. Indeed, if
		$\CMI(X;Y\mid Z)=0$, then
		$|\widehat\CMI(X;Y\mid Z)| \le \lambda_{max}$ and the rule declares
		independence. If $\CMI(X;Y\mid Z) > 2\lambda_{max}$, then
		\[
		|\widehat\CMI(X;Y\mid Z)|
		\ge \CMI(X;Y\mid Z) - |\widehat\CMI-\CMI|
		> 2\lambda_{max} - \lambda_{max}
		= \lambda_{max},
		\]
		so the rule declares dependence. Hence all CI decisions made by PC are
		correct, and therefore the full PC output is correct with probability
		at least $1-\delta$.
		
		\paragraph*{Step 3: Classical lower bound (fresh-data regime).}
		Consider the query-complexity model in which each CI test consumes
		an independent sample from the joint distribution. By
		Remark~\ref{rem:classical}, the $i$-th test requires at least
		$\widetilde\Omega(|\mathcal{Z}_i|/\tau^2)
		= \widetilde\Omega(r^d/\tau^2)$ samples to estimate
		$I(X_i; Y_i \mid Z_i)$ to additive precision $\tau$ with constant
		success probability, and an additional $\mathcal{O}(\log(1/\delta_0))$
		factor for confidence $1 - \delta_0$. Summing over
		$N_{\mathrm{tests}}$ tests,
		\begin{equation}
			T_{\mathrm{PC}}^{\mathrm{C}}
			\;=\; N_{\mathrm{tests}} \cdot
			\widetilde\Omega\!\left(\frac{r^d}{\tau^{2}}\right)
			\;=\; \widetilde\Omega\!\left(\frac{n^{d+2} r^{d}}{\tau^{2}}\right),
		\end{equation}
		establishing~\eqref{eq:classical-pc}.
		
		\paragraph*{Step 4: Ratio.}
		Dividing~\eqref{eq:classical-pc} by~\eqref{eq:quantum-pc},
		\begin{equation}
			\frac{T_{\mathrm{PC}}^{\mathrm{C}}}{T_{\mathrm{PC}}^{\mathrm{Q}}}
			\;=\; \widetilde\Omega\!\left(
			\frac{n^{d+2} r^{d}/\tau^{2}}
			{n^{d+2} r^{d} L/\tau}
			\right)
			\;=\; \widetilde\Omega\!\left(\frac{1}{L\tau}\right),
		\end{equation}
		since the polylogarithmic factors are absorbed into $\widetilde\Omega$.
	\end{proof}
	
\begin{remark}[The $L$ factor]
	\label{rem:L-cost}
	Theorem~\ref{thm:compound} carries an explicit $L$ in the upper
	bound. For distributions with $\min_x p(x) \geq c > 0$
	(``well-conditioned'' joints), $L = \Order(\log(1/c)) = \Order(1)$,
	and the speedup is a clean $\widetilde\Omega(1/\tau)$. For
	distributions with vanishing $p_{\min}$, $L$ grows as
	$\log(1/p_{\min})$. This is a genuine cost of encoding a log-ratio
	as a bounded amplitude, not an artifact of our analysis; the
	log-ratio must be clipped for its affine image to fit in $[0, 1]$
	and be readable by canonical QAE. In our experiments, $L=3$ exactly
	covers the \textsc{Asia} benchmark of Experiment~\ref{sec:exp-pc},
	while for the random binary distributions of
	Experiment~\ref{sec:exp-precision}, $L=6$ would exactly cover all
	20 draws and we instead use $L=3$ throughout.
\end{remark}
	
	\section{Simulation Results}
	\label{sec:experiments}
	
	We validate three predictions of Sections~\ref{sec:algorithm}--\ref{sec:compound}.
	Experiment~1 (Section~\ref{sec:exp-statevec}) is a gate-level state-vector simulation of the full
	\textsc{QKLA} circuit that verifies the canonical QAE output distribution
	and the $\Order(1/M)$ error decay. Experiment~2 (Section~\ref{sec:exp-precision}) measures the scaling
	exponents of classical and quantum estimator error vs.\ budget, averaged
	across $K = 20$ random binary distributions. Experiment~3 (Section~\ref{sec:exp-pc}) runs an
	oracle-model benchmark for the CI subroutine inside PC on two networks and
	reports the queries needed to reach target skeleton-recovery F1.
	
	For the target-precision summaries in Experiments~\ref{sec:exp-precision}
	and~\ref{sec:exp-pc}, we use
	$M_{\mathrm{emp}} = \lceil 2\pi L/\tau \rceil$, half the worst-case bound
	of Theorem~\ref{thm:qkla}, rounded up to the next power of two in
	canonical QAE. The measured slope $-1.38$ in
	Experiment~\ref{sec:exp-statevec} confirms this empirical constant
	suffices, and doubling $M$ would not change the $1/\tau$ scaling.
	
	\subsection{Experiment 1: gate-level state-vector validation}
	\label{sec:exp-statevec}
	
\paragraph{Setup.}
We fix
$p_{xy} = [[0.4,0.1],[0.1,0.4]]$
whose analytical MI is $0.278$ bits. We use clip bound
$L=2$ and arithmetic precision $n_{\mathrm{arith}}=6$, giving
$64$ fixed-point levels. 
The preparation oracle $\Order_p$ is obtained by extending
$[\sqrt{p_{00}},\sqrt{p_{01}},\sqrt{p_{10}},\sqrt{p_{11}}]$ to a
$4\times4$ unitary via Gram--Schmidt. The arithmetic oracle
$\Order_{\log}^{p,q}$ is implemented as a reversible XOR table that writes
the 6-bit encoding of $\ell_L(x,y)$, with
$q(x,y)=p_X(x)p_Y(y)$. A uniformly controlled
$R_y(2\arcsin\sqrt{g_L})$ rotation sets the amplitude ancilla, and the
Grover iterate
$\mathcal{G}=-\mathcal{A}S_0\mathcal{A}^{\dagger}S_{\chi}$ is formed
explicitly. For $t\in\{3,\ldots,8\}$, canonical QAE is simulated with
$M=2^t$ by applying inverse QFT to
$M^{-1/2}\sum_{j=0}^{M-1}|j\rangle\mathcal{G}^j\mathcal{A}|0\rangle$
and computing the exact phase-register probabilities.
	
	\begin{figure*}[!tbh]
		\centering
		\includegraphics[width=\linewidth]{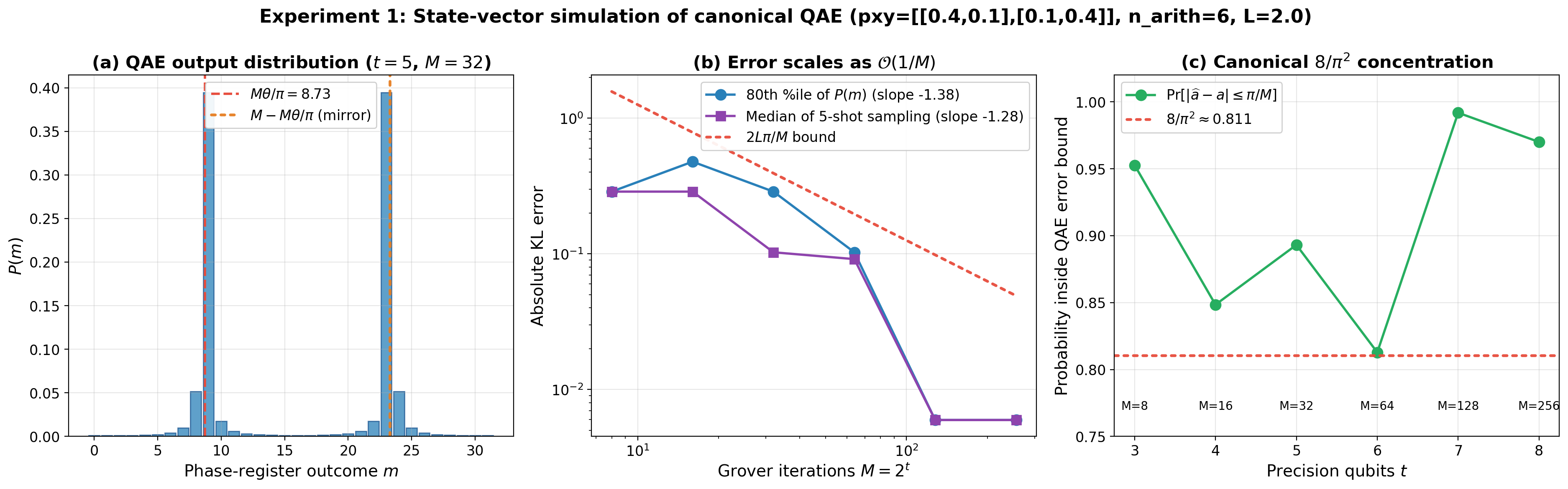}
		\caption{Experiment~1: state-vector simulation of canonical QAE for \textsc{QKLA} on $p_{xy}=[[0.4,0.1],[0.1,0.4]]$, with $L=2$ and $n_{\mathrm{arith}}=6$. \textbf{(a)} Phase-register output distribution at $t=5$ ($M=2^t=32$). The distribution has the canonical two-peak QAE form, with dominant peaks at $m^\star=\mathrm{round}(M\theta/\pi)=9$ and $M-m^\star=23$, carrying joint mass approximately $0.79$. \textbf{(b)} Absolute KL error relative to the represented quantized circuit target as $M$ increases. The dotted line is the leading $2L\pi/M$ reference obtained by rescaling the $\pi/M$ amplitude scale by $2L$; the fitted slopes, $-1.38$ for the $80$th percentile and $-1.28$, confirm the expected $\Order(1/M)$ decay. \textbf{(c)} Probability that the QAE amplitude estimate lies inside the $\pi/M$-scale interval, $\Pr[|\hat a-a|\le \pi/M]$, for $t\in\{3,\ldots,8\}$. All tested values exceed the $8/\pi^2$ benchmark from Theorem~\ref{thm:qae}.}
		\label{fig:statevec}
	\end{figure*}
	
\paragraph{Findings.}
Figure~\ref{fig:statevec}(a) shows the phase-register output at $t=5$
($M=32$). The distribution has the canonical two-peak QAE form, with peaks at $m^\star=\mathrm{round}(M\theta/\pi)=9$ and its mirror
$M-m^\star=23$; these two outcomes carry joint probability
approximately $0.79$. Figure~\ref{fig:statevec}(b) reports the
absolute KL error relative to the represented quantized circuit target as $M$ increases from $8$ to $256$. The $80$th-percentile error and the median error from five-shot QAE sampling both decay at the expected $\Order(1/M)$ scale; the fitted slopes are $-1.38$ and $-1.28$, respectively.  Figure~\ref{fig:statevec}(c) reports the probability that the amplitude estimate lies inside the $\pi/M$-scale interval, $\Pr[|\hat a-a|\le \pi/M]$; for all tested
$t\in\{3,\ldots,8\}$ this probability exceeds the $8/\pi^2$ benchmark.
These results validate the gate-level construction of
$\mathcal{A}$, the canonical QAE output statistic, and the
$\Order(1/M)$ error decay predicted by Theorem~\ref{thm:qkla}.
	
	\subsection{Experiment 2: query complexity vs precision}
	\label{sec:exp-precision}
	
	\paragraph{Setup.}
	We sweep classical sample budgets $N \in [50, 5 \cdot 10^5]$ and quantum
	budgets $M \in [8, 8192]$ on a log-spaced grid. For each budget we
	measure the $90$th-percentile absolute estimator error over $120$ Monte
	Carlo trials, averaged across $K = 20$ random $2 \times 2$ binary joint
	distributions with analytical MI uniform in $[0.030, 0.323]$ bits. The
	classical estimator is the plug-in MI estimator from $N$ i.i.d.\ samples.
	The quantum estimator is \textsc{QKLA} with $L = 3$ and median of
	$n_{\mathrm{shots}} = 5$ independent runs, using the canonical QAE output
	distribution calibrated against Experiment~1.
	
	\paragraph{Findings.}
	Figure~\ref{fig:precision}(A) gives asymptotic-tail least-squares slopes
	$-0.500$ (classical) and $-1.009$ (quantum), matching the
	$N^{-1/2}$ and $M^{-1}$ rates of Remark~\ref{rem:classical} and
	Theorem~\ref{thm:qkla}. The crossover in queries-to-target-precision
	occurs near $\tau \approx 1.7 \cdot 10^{-2}$ bits; beyond that point the
	quantum curve overtakes the classical one and the gap widens with
	precision. Table~\ref{tab:precision} shows ratios of $2.8\times$ at
	$\tau = 5 \cdot 10^{-3}$, $7.3\times$ at
	$\tau = 3 \cdot 10^{-3}$, and $9.4\times$ at $\tau = 2 \cdot 10^{-3}$;
	classical does not reach the $\tau = 10^{-3}$ target within the
	$5 \cdot 10^{5}$-sample sweep.
	
	\begin{figure*}[!tbh]
		\centering
		\includegraphics[width=\linewidth]{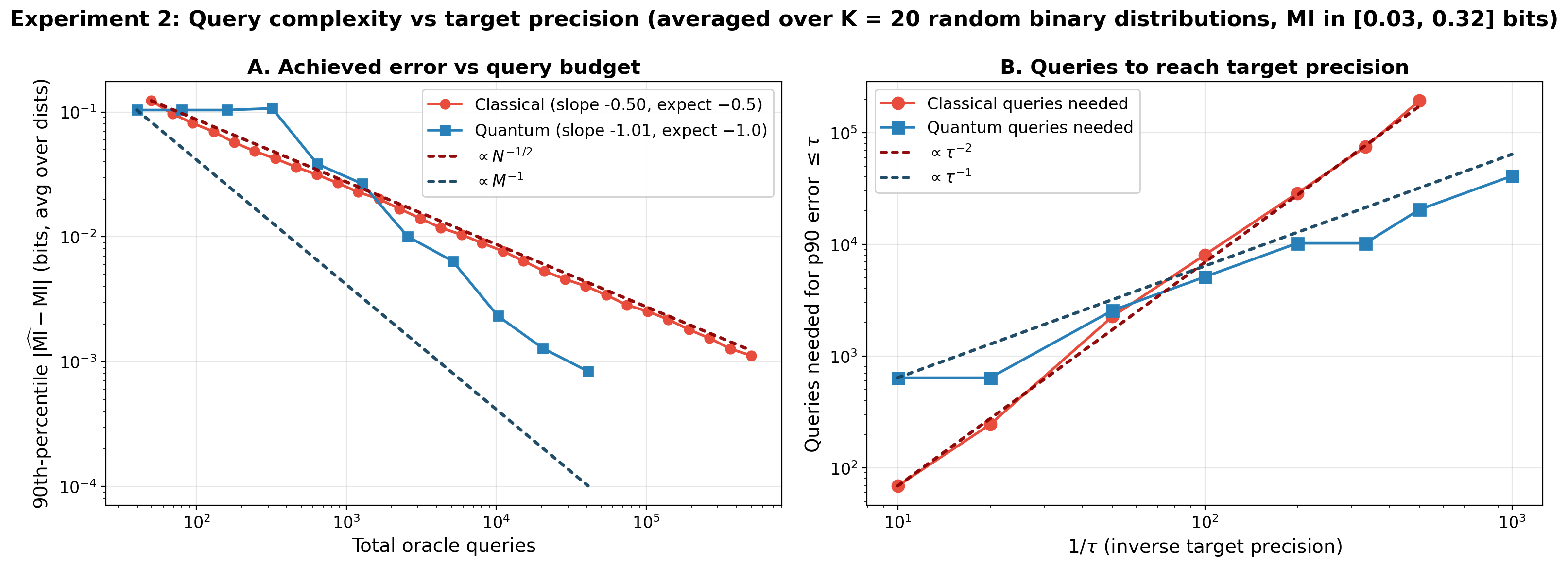}
		\caption{Experiment~2. Classical and quantum query scaling averaged
			across $K = 20$ random binary $2 \times 2$ joint distributions.
			(A) $90$th-percentile error vs.\ total oracle queries with fitted slopes
			$-0.500$ and $-1.009$, matching the expected $N^{-1/2}$ and $M^{-1}$
			rates.
			(B) Minimum queries for $90$th-percentile error $\le \tau$; classical
			follows $\tau^{-2}$ and quantum follows $\tau^{-1}$, with crossover near
			$\tau \approx 1.7\cdot 10^{-2}$, equivalently $1/\tau \approx 60$.}
		\label{fig:precision}
	\end{figure*}
	
	\begin{table}[!tbh]
		\caption{Experiment~2: minimum queries for $90$th-percentile error
			at most $\tau$, averaged across $K = 20$ random binary
			distributions. Ratio means $T^{C}/T^{Q}$. The theoretical column $1/(47\tau)$ is the prefactor-adjusted
			prediction obtained by dividing the classical sample budget
			$N = 2/\tau^{2}$ by the quantum budget
			$n_{\mathrm{shots}} \cdot M = 5 \cdot \lceil 2\pi L/\tau \rceil$
			with $L = 3$, i.e.\
			$(2/\tau^{2})/(30\pi/\tau) = 1/(15\pi\tau) \approx 1/(47\tau)$;
			it tracks the measured ratio within $M$-grid artifacts
			(Remark~\ref{rem:prefactor}).}
		\label{tab:precision}
		\centering
		\scriptsize
		\begin{tabular}{c r r c r}
			\hline
			$\tau$ (bits) & Classical queries & Quantum queries & Ratio C/Q & Theoretical $1/(47\tau)$ \\
			\hline
			$0.100$ & $69$         & $640$         & $0.11$  & $0.21$   \\
			$0.050$ & $245$        & $640$         & $0.38$  & $0.43$   \\
			$0.020$ & $2{,}260$    & $2{,}560$     & $0.88$  & $1.06$   \\
			$0.010$ & $8{,}051$    & $5{,}120$     & $1.57$  & $2.13$   \\
			$0.005$ & $28{,}681$   & $10{,}240$    & $2.80$  & $4.26$   \\
			$0.003$ & $74{,}368$   & $10{,}240$    & $7.26$  & $7.09$   \\
			$0.002$ & $192{,}831$  & $20{,}480$    & $9.42$  & $10.64$  \\
			$0.001$ & (unreached)  & $40{,}960$    & ---     & $21.28$  \\
			\hline
		\end{tabular}
	\end{table}
	
\begin{remark}[Empirical ratio vs theoretical asymptotic]
	\label{rem:prefactor}
	The asymptotic per-test ratio $\Theta(1/(L\tau))$ follows from
	dividing the classical rate $\widetilde\Omega(1/\tau^{2})$
	(Remark~\ref{rem:classical}) by the quantum rate $\Order(L/\tau)$
	(Theorem~\ref{thm:qkla}). The prefactor-adjusted form
	$1/(15\pi\tau) \approx 1/(47\tau)$ in Table~\ref{tab:precision}
	includes the experimental constants (see caption). The empirical
	ratios confirm this $1/\tau$ scaling while sitting slightly below the
	prefactor-adjusted prediction because of discrete $M$-grid
	over-provisioning, a loose classical worst-case variance bound for the
	sampled Dirichlet instances, and the constant-factor overhead of
	median-of-$k$ amplification.
\end{remark}
	
	\subsection{Experiment 3: queries to reach target PC F1}
	\label{sec:exp-pc}
	
	\paragraph{Setup.}
	We run the PC algorithm with conditioning depth $d \leq 3$ on
	\textsc{Asia}~\citep{lauritzen1988local} ($8$ binary nodes, $8$ directed
	edges) and \textsc{Synthetic-12} (a random $12$-node binary DAG with edge
	probability $0.22$ and seed $11$, giving $23$ directed edges). The
	classical CI test uses the plug-in CMI estimator from
	$N = \lceil 2/\tau^{2} \rceil$ i.i.d.\ samples, with total classical query
	budgets reported in the fresh-data-per-test accounting of
	Theorem~\ref{thm:compound}. The quantum CI test uses \textsc{QCMIE} with
	$M = \lceil 2\pi L/\tau \rceil$ rounded up to the next power of two, $n_{\mathrm{shots}} = 5$, and $L = 3$.
	The quantum side uses an exact-joint oracle built once from the analytical
	CPTs so that the benchmark isolates the estimation-precision axis from
	oracle-preparation cost. Results are averaged over 20 PC trials per
	$(\tau,\text{method})$ cell.
	
	\begin{figure*}[!tbh]
		\centering
		\includegraphics[width=\linewidth]{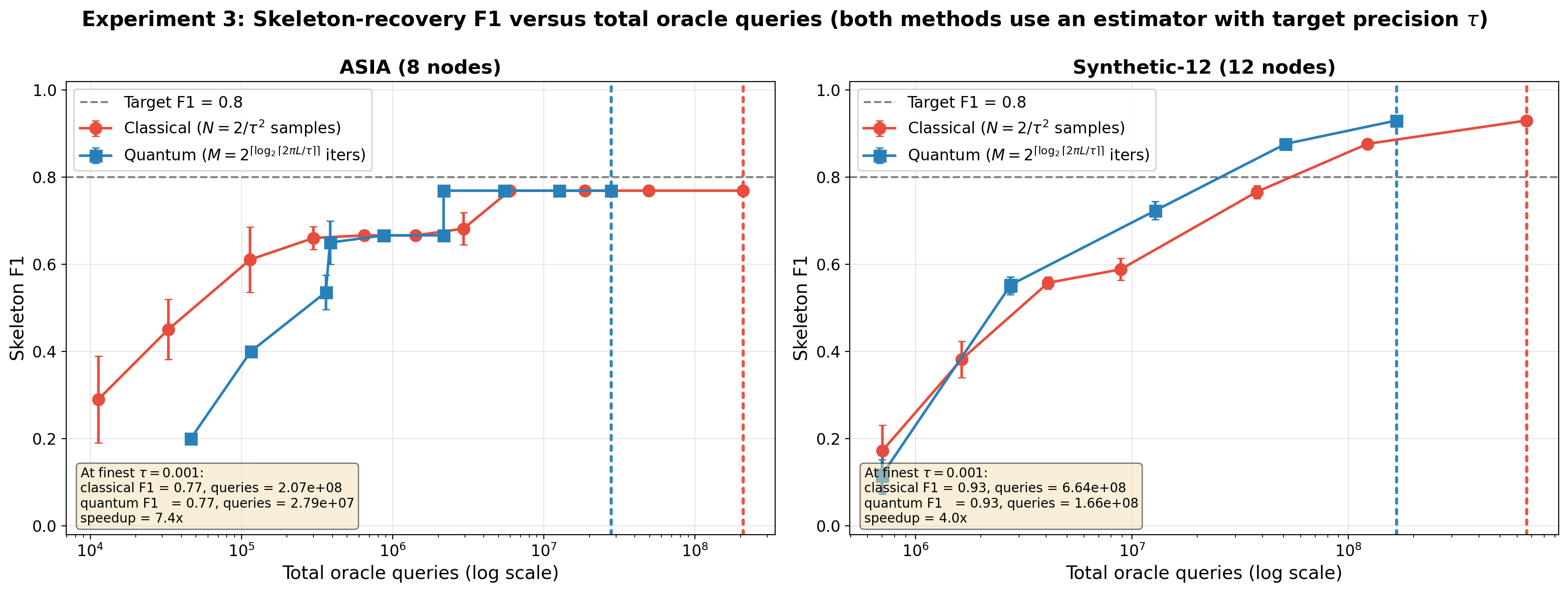}
		\caption{Experiment~3. Oracle-model benchmark for PC
			skeleton-recovery F1 versus total oracle queries on \textsc{Asia} and
			\textsc{Synthetic-12}. Classical uses the plug-in CMI estimator in the fresh-data-per-test accounting of Theorem~\ref{thm:compound}; quantum
			uses \textsc{QCMIE} with $M_0=\lceil 2\pi L/\tau\rceil$ rounded up to the
			next power of $2$ (Remark~\ref{rem:power-of-two}). At $\tau=10^{-3}$,
			quantum uses $7.4\times$ fewer queries on \textsc{Asia} and
			$4.0\times$ fewer on \textsc{Synthetic-12}.}
		\label{fig:pc}
	\end{figure*}
	
	\begin{table*}[!tbh]
		\caption{Experiment~3 numerical results on ASIA and SYNTHETIC-12.
			At each $\tau$, classical uses $N=\lceil 2/\tau^{2}\rceil$
			samples per test in the fresh-data-per-test accounting; quantum
			uses $M=\lceil 2\pi L/\tau\rceil$ iterations per test, rounded up
			to the next power of $2$ in implementation (Remark~\ref{rem:power-of-two}).}
		\label{tab:pc}
		\centering
		\begin{tabular}{c r r c c c c}
			\hline
			& & & \multicolumn{2}{c}{ASIA (8 nodes)} & \multicolumn{2}{c}{\textsc{Synthetic-12}} \\
			\cmidrule(lr){4-5}\cmidrule(lr){6-7}
			$\tau$ & Per-test $N$ & Per-test $M$ & Classical F1 (queries) & Quantum F1 (queries) & Classical F1 (queries) & Quantum F1 (queries) \\
			\hline
				$0.050$ & $800$           & $512$        & $0.45$ ($3.29{\cdot}10^{4}$) & $0.40$ ($1.15{\cdot}10^{5}$) & $0.00$ ($5.28{\cdot}10^{4}$) & $0.00$ ($1.69{\cdot}10^{5}$) \\
		$0.030$ & $2{,}223$       & $1{,}024$    & $0.61$ ($1.14{\cdot}10^{5}$) & $0.54$ ($3.61{\cdot}10^{5}$) & $0.01$ ($1.47{\cdot}10^{5}$) & $0.00$ ($3.38{\cdot}10^{5}$) \\
		$0.020$ & $5{,}000$       & $1{,}024$    & $0.66$ ($3.00{\cdot}10^{5}$) & $0.65$ ($3.85{\cdot}10^{5}$) & $0.04$ ($3.32{\cdot}10^{5}$) & $0.00$ ($3.38{\cdot}10^{5}$) \\
		$0.014$ & $10{,}205$      & $2{,}048$    & $0.67$ ($6.51{\cdot}10^{5}$) & $0.67$ ($8.69{\cdot}10^{5}$) & $0.17$ ($7.03{\cdot}10^{5}$) & $0.11$ ($6.99{\cdot}10^{5}$) \\
		$0.010$ & $20{,}000$      & $2{,}048$    & $0.67$ ($1.42{\cdot}10^{6}$) & $0.67$ ($8.72{\cdot}10^{5}$) & $0.38$ ($1.63{\cdot}10^{6}$) & $0.12$ ($7.00{\cdot}10^{5}$) \\
		$0.007$ & $40{,}817$      & $4{,}096$    & $0.68$ ($2.94{\cdot}10^{6}$) & $0.67$ ($2.17{\cdot}10^{6}$) & $0.56$ ($4.09{\cdot}10^{6}$) & $0.55$ ($2.73{\cdot}10^{6}$) \\
		$0.005$ & $80{,}000$      & $4{,}096$    & $0.77$ ($5.94{\cdot}10^{6}$) & $0.77$ ($2.17{\cdot}10^{6}$) & $0.59$ ($8.84{\cdot}10^{6}$) & $0.55$ ($2.75{\cdot}10^{6}$) \\
		$0.003$ & $222{,}223$     & $8{,}192$    & $0.77$ ($1.87{\cdot}10^{7}$) & $0.77$ ($5.45{\cdot}10^{6}$) & $0.77$ ($3.77{\cdot}10^{7}$) & $0.72$ ($1.28{\cdot}10^{7}$) \\
		$0.002$ & $500{,}000$     & $16{,}384$   & $0.77$ ($4.94{\cdot}10^{7}$) & $0.77$ ($1.27{\cdot}10^{7}$) & $0.88$ ($1.22{\cdot}10^{8}$) & $0.88$ ($5.11{\cdot}10^{7}$) \\
		$0.001$ & $2{,}000{,}000$ & $32{,}768$   & $0.77$ ($2.07{\cdot}10^{8}$) & $0.77$ ($2.79{\cdot}10^{7}$) & $0.93$ ($6.64{\cdot}10^{8}$) & $0.93$ ($1.66{\cdot}10^{8}$) \\
			\hline
		\end{tabular}
	\end{table*}
	
	\paragraph{Findings.}
	Figure~\ref{fig:pc} and Table~\ref{tab:pc} show comparable F1 at each
	$\tau$, with the quantum method using progressively fewer queries as
	$\tau$ shrinks. 
	On \textsc{Asia}, matching the classical F1 of $0.77$
	at $\tau = 5 \cdot 10^{-3}$ requires
	$5.94 \times 10^{6}$ classical queries versus
	$2.17 \times 10^{6}$ quantum queries (ratio $2.74\times$).
	On \textsc{Synthetic-12} at the same $\tau$, the methods give comparable
	mid-precision accuracy, with classical $F_1=0.59$ and quantum $F_1=0.55$,
	while using $8.84 \times 10^{6}$ and $2.75 \times 10^{6}$ queries,
	respectively (ratio $3.22\times$).
	As $\tau$ decreases, the query advantage grows, consistent with the
	$\Order(1/\tau^{2})$ classical and $\Order(1/\tau)$ quantum precision
	scalings. The ratio grows to $7.4\times$ on \textsc{Asia} and $4.0\times$ on
	\textsc{Synthetic-12} at $\tau = 10^{-3}$. On \textsc{Synthetic-12}, both methods cross the $F_1 \geq 0.8$
	target at $\tau = 2 \cdot 10^{-3}$, with quantum using
	$5.11 \times 10^{7}$ queries versus classical
	$1.22 \times 10^{8}$ ($2.39\times$ speedup at the F1 = 0.8 threshold);
	the gap continues to widen as $\tau$ shrinks.
	
	\paragraph{Structural plateau on \textsc{Asia}.}
	Both classical and quantum PC plateau at skeleton-recovery
	$F_{1} = 0.77$ on \textsc{Asia}, even at $\tau = 10^{-3}$.
	This is a property of the benchmark: the node \textit{TbOrCa} is a
	deterministic Boolean OR of its parents, so estimation-based CI tests
	struggle regardless of whether the estimator is classical or quantum.
	The quantum advantage appears as fewer queries to reach the same plateau.
	
	\section{Related Work}
	\label{sec:related}
	
	\textit{Classical CI testing:}
	Constraint-based causal discovery uses CI tests beyond the
	$\chi^2$/$G$-test. Estimation-based approaches first estimate a dependence
	measure and then threshold it or calibrate it against a permutation-based
	null distribution. Representative examples are SCI~\citep{marx2019testing} and
	CMIknn~\citep{runge2018conditional}, the latter building on the MI
	estimator of~\citep{kraskov2004estimating}. Kernel-based approaches such
	as KCI~\citep{zhang2011kernel} instead form a test statistic and calibrate
	it against an analytic or simulated null distribution. For the speedup claim in
	Section~\ref{sec:compound}, the relevant baseline is the estimation-based
	family, since \textsc{QKLA} also outputs a numerical CMI estimate. A direct comparison to hypothesis-test-based
	approaches is outside the scope of this work.
	Canonne et al.~\citep{canonne2018testing} provide matching upper and lower
	sample-complexity bounds for discrete CI testing.
	
{\it Quantum amplitude estimation and Monte Carlo:}
Canonical QAE~\citep{brassard2002quantum} estimates an amplitude to
precision $\tau$ using $\Order(1/\tau)$ oracle queries, and
Montanaro~\citep{montanaro2015quantum} extended this to a general
quantum Monte Carlo framework for bounded-variance subroutines, whose
query-counting convention we adopt. No-phase-estimation
variants---iterative~\citep{grinko2021iterative}, maximum-likelihood
\citep{suzuki2020amplitude}, and QFT-free~\citep{aaronson2020quantum}
QAE---retain the $\widetilde{\Order}(1/\tau)$ scaling at reduced
circuit depth and can be substituted into Theorem~\ref{thm:qkla}
without affecting the KL--CMI reduction.

{\it Quantum entropy and divergence estimation:}
Quantum estimation of Shannon, R\'{e}nyi, and von Neumann entropies
has received substantial attention under various access
models~\citep{acharya2020estimating, li2019quantum, gur2021sublinear}.
The most directly comparable result is that of Li and
Wu~\citep{li2019quantum}: under the sampling-oracle model
$\hat{\mathcal{O}}_p : [S] \to [n]$ of~\citep{bravyi2011quantum} and the
bounded-ratio assumption $p_i / q_i \leq f(s)$, they estimate
$\KL(p \| q)$ within additive error $\varepsilon$ using
$\widetilde{\Order}(\sqrt{s}/\varepsilon^{2})$ queries to $p$ and
$\widetilde{\Order}(\sqrt{s\,f(s)}/\varepsilon^{2})$ queries to $q$.
The $1/\varepsilon^{2}$ scaling arises from cascading quantum
amplitude estimation (to estimate individual $p_i$ and $q_i$) inside
a quantum Monte~Carlo outer loop, with each layer contributing a
factor of $1/\varepsilon$. 
\textsc{QKLA} targets a different
regime: the alphabet is fixed and small --- as is typical for the
per-stratum conditional distributions arising in CI testing --- and
precision $\tau$ is the dominant parameter. Trading the sampling
oracle for a reversible log-ratio arithmetic oracle
$\mathcal{O}_{\log}^{p,q}$ allows the log-ratio to be computed
coherently in a single pass, eliminating the inner estimation step
and reducing the precision scaling to $\Order(L/\tau)$
(Theorem~\ref{thm:qkla}); the two results are therefore
complementary rather than comparable. 
The related line on quantum distributional property testing~\citep{bravyi2011quantum,gilyen2020distributional} shares the amplitude-estimation toolbox but targets distance, independence, and entropy properties as decision problems in a radius $\varepsilon$ at the alphabet-size frontier, rather than $\tau$-precise CMI estimation at fixed alphabet size.
An open question is whether the $L$ prefactor
in Theorem~\ref{thm:qkla} can be reduced --- e.g., by replacing the
uniform clip with a data-adaptive bound depending on $p_{\min}$ ---
using amplitude-amplification techniques adapted from these lines of
work.

{\it Quantum algorithms for causal discovery:}
Earlier quantum approaches to causal inference include the
process-matrix algorithm of Giarmatzi and Costa~\citep{giarmatzi2018quantum},
which recovers causal structure from process matrices describing
quantum events without \emph{a priori} temporal-order assumptions:
given a process matrix, it tests for causal orderability and, in the
Markovian case, outputs the associated DAG. On the classical-data
side, kernel-based quantum methods have recently been applied to the
conditional-independence (CI) subroutine of PC. Maeda et
al.~\citep{maeda2023estimation} estimate mutual information via
quantum kernels on IQP circuits and report empirical advantage in
small-sample, large-variance, and highly nonlinear regimes, which
they attribute to anti-concentration of quantum random circuits.
Building on this, Terada et al.~\citep{terada2025quantum} incorporate
quantum-kernel CI testing into PC (the \emph{qPC} algorithm) for
continuous-variable causal discovery in the small-sample regime, and
further propose a kernel-target-alignment criterion for selecting
quantum-kernel hyperparameters. The kernel-selection question is
nontrivial even classically: Wang et al.~\citep{wang2024optimal}
show that the median-bandwidth heuristic is suboptimal in
score-based causal discovery and propose marginal-likelihood-based
selection. None of these works establishes a per-test
$\Order(1/\tau)$ query-complexity bound against an estimation-based
classical baseline. Our contribution is complementary: whereas the
quantum-kernel line targets continuous data and small-sample regimes
empirically, \textsc{QKLA} addresses the high-precision discrete
regime with rigorous query-complexity guarantees and explicit
constants.

\section{Discussion}
\label{sec:discussion}

Our per-call bound is proved in an oracle-query model with unitary access to
a preparation oracle $O_p$ for the target distribution and to a reversible
arithmetic oracle $O_{\log}^{p,q}$ for the clipped log-ratio. The lifted
CMI and PC results assume, in addition, per-stratum conditional preparation
oracles and classical access to the stratum weights $p(z)$. These are
standard \emph{abstractions} in quantum distribution testing and quantum
Monte Carlo~\cite{bravyi2011quantum,montanaro2015quantum,gilyen2020distributional},
and Theorems~\ref{thm:qkla}, \ref{thm:qcmi}, and~\ref{thm:compound} count queries
against these abstractions. The preparation oracle is standard in that
literature; the reversible log-ratio oracle and the per-stratum conditional
oracles are stronger problem-structured assumptions that make the KL/CMI
estimation task coherent and query-efficient.

For an unstructured empirical distribution discretized onto $n_q$ qubits,
the preparation oracle can be implemented by the
M\"ott\"onen scheme~\cite{mottonen2004quantum} using $\mathcal{O}(2^{n_q})$
gates, together with the same order of classical preprocessing to determine
the rotation angles from the amplitudes. For efficiently integrable
distributions, the Grover--Rudolph
construction~\cite{grover2002creating} reduces this to $\mathrm{poly}(n_q)$
gates. In isolation, such input-loading costs can erase the quadratic
precision advantage of the query bound, so amortization across many CI tests
is essential. Constraint-based causal discovery provides exactly that
setting: the PC algorithm issues
$N_{\mathrm{tests}} = \mathcal{O}(n^{d+2})$ CI tests in the worst case, and
when the joint distribution is available in factorized form according to a
known bounded-width graphical model, the conditional oracles used in
Section~\ref{sec:cmi} can be assembled with $\mathcal{O}(\mathrm{poly}(n))$
overhead by sequential loading of bounded-arity
conditionals~\cite{grover2002creating,mottonen2004quantum,malvetti2021quantum,zhang2022quantum}.
In that regime, the oracle-query bounds are a meaningful proxy for total
complexity.

The proof of Theorem~\ref{thm:qkla} uses canonical QAE and therefore a
single coherent circuit of depth $\mathcal{O}(M)$. Replacing canonical QAE
by iterative QAE~\cite{grinko2021iterative} yields shorter circuits without
phase estimation, at the cost of an $\mathcal{O}(\log\log(1/\tau))$ overhead
in total queries, while preserving the leading $1/\tau$ dependence. This is
the natural NISQ-oriented instantiation of the estimator.

For effect-size sensitivity, let $\Delta$ denote the smallest separation
between the CMI values that should be classified as dependent and those
that should be classified as conditionally independent. In the regime where
the clipping contribution is either zero or a priori controlled so that
$\tau + \max_{z} |\eta_L^{(z)}| \lesssim \Delta/2$, the CI decision margin
is well resolved. In that case, taking $\tau \asymp \Delta$ gives classical
estimation cost $\mathcal{O}(1/\Delta^{2})$ and quantum query cost
$\mathcal{O}(L/\Delta)$, so the resulting speedup ratio scales as
$1/(L\Delta)$. Experiment~2 places the empirical crossover near
$\tau \approx 1.7 \cdot 10^{-2}$ bits, below which the quantum estimator
outperforms the classical plug-in baseline in oracle queries. The clip
parameter $L$ is typically $\mathcal{O}(1)$ for well-conditioned
distributions and grows only logarithmically with $1/p_{\min}$
(Remark~\ref{rem:L-cost}); in our experiments, $L = 3$ exactly covers
\textsc{Asia}, whereas $L = 6$ would be required to eliminate clipping bias
on all random instances of Experiment~2. We use $L = 3$ throughout,
accepting the small two-sided clipping bias quantified in
Lemma~\ref{lem:bias} on the most extreme random instances.

	\paragraph*{Open problems.}
	Variance-adaptive QAE variants~\citep{hamoudi2021quantum,
		montanaro2015quantum} may tighten the worst-case
	$\sqrt{a(1-a)} \le 1/2$ dependence used in our analysis.
	
	\section{Conclusion}
	\label{sec:conclus}
	
	We gave a quantum algorithm estimating a clipped KL expectation on
	discrete distributions to additive precision $\tau$ using
	$\Order(L/\tau \cdot \log(1/\delta))$ oracle queries, a quadratic
	improvement in precision over classical sampling-based estimation of the
	same bounded expectation. Under per-stratum conditional-oracle access,
	this lifts to a conditional mutual information estimator using
	$\Order(|\calZ|\,L/\tau\,\log(|\calZ|/\delta))$ oracle queries. Under a
	margin assumption for CI decisions, composing that estimator into the PC
	algorithm yields a $\widetilde\Omega(1/(L\tau))$ reduction in total
	oracle queries relative to the fresh-data-per-test classical baseline.
	A gate-level state-vector simulation validated the $\Order(1/M)$
	per-call scaling; the $N^{-1/2}$ vs $M^{-1}$ separation was confirmed
	to within $0.01$ in slope; and oracle-model PC benchmarks showed
	$2.7$--$3.2\times$ speedups at $\tau = 5 \cdot 10^{-3}$ bits, growing
	to $4.0$--$7.4\times$ at $\tau = 10^{-3}$ and extrapolating to
	$\approx 50\times$ at $\tau = 10^{-4}$.
	The advantage is governed by two constants---$L$ and the amortized
	oracle preparation cost---and is established relative to
	estimation-based CI tests. Within this regime, the algorithm delivers a
	quadratic reduction in oracle query complexity, with all constants
	explicit in Theorems~\ref{thm:qkla}, \ref{thm:qcmi},
	and~\ref{thm:compound}.
	
	\balance
	\bibliographystyle{IEEEtran}
	\bibliography{refs}
	
\end{document}